\documentclass[lettersize,journal]{IEEEtran}
	\usepackage{amsmath,amsfonts}
	\usepackage{algorithmic}
	\usepackage{algorithm}
	\usepackage{array}
	\usepackage[caption=false,font=normalsize,labelfont=sf,textfont=sf]{subfig}
	\usepackage{textcomp}
	\usepackage{stfloats}
	\usepackage{url}
	\usepackage{verbatim}
	\usepackage{graphicx}
	\usepackage{bm}
	\usepackage{cite}
	\usepackage[numbers]{natbib}
	\usepackage{amsthm}  

    \usepackage{array}
    \usepackage{makecell}
    \usepackage{amssymb} 
    \usepackage{url}
	\hyphenation{op-tical net-works semi-conduc-tor IEEE-Xplore}   
   \usepackage{titlesec}  
   \titlespacing*{\subsubsection}{0pt}{0.5\baselineskip}{0.5\baselineskip} 
   \usepackage{indentfirst}
   \usepackage{mathrsfs}


\newtheoremstyle{my-style}          
{\topsep}                         
{\topsep}                         
{\normalfont}                     
{\parindent}                      
{\itshape}                        
{:}                               
{.5em}                            
{}                                

\theoremstyle{my-style}
\newtheorem{proposition}{Proposition}

\makeatletter
\renewenvironment{proof}[1][\proofname]{%
	\par\pushQED{\qed} 
	\normalfont
	\partopsep0pt \itemindent0pt
	\begingroup
	\itshape #1\@addpunct{: }
	\normalfont
	\ignorespaces
}{%
	\endgroup
	\popQED
}
\makeatother

	\makeatother
	\usepackage[colorlinks,
	linkcolor=blue,
	anchorcolor=blue,
	citecolor=blue]{hyperref}
	\usepackage{xcolor}  
	\let\oldcite=\cite
	\renewcommand{\cite}[1]{\textcolor{blue}{\oldcite{#1}}}
	\let\oldeqref\eqref   
	\renewcommand{\eqref}[1]{\textcolor{blue}{\oldeqref{#1}}}   
	\let\oldref\ref
	\renewcommand{\ref}[1]{\textcolor{blue}{\oldref{#1}}}
	\newcommand{\mycitenum}[1]{\textcolor{blue}{\citenum{#1}}}  
	\usepackage{amsthm}   

	\usepackage{amsmath}
	\usepackage{amsthm}
	\theoremstyle{definition}
	\newtheorem{example}{Example}

	\begin{document}
	\bstctlcite{BSTcontrol} 

	
\title{A Secure Affine Frequency Division Multiplexing System for Next-Generation Wireless Communications}
	
	\author{Ping Wang,~\IEEEmembership{Graduate Student Member,~IEEE}, Zulin Wang,~\IEEEmembership{Member,~IEEE}, \\ Yuanhan Ni,~\IEEEmembership{Member,~IEEE}, Qu Luo,~\IEEEmembership{Member,~IEEE}, Yuanfang Ma, \\Xiaosi Tian,~\IEEEmembership{Graduate Student Member,~IEEE}, and Pei Xiao,~\IEEEmembership{Senior Member,~IEEE}  
	
	\thanks{This work was supported in part by the China Postdoctoral Science Foundation under Grant Number 2024M764088; and in part by the National Natural Science Foundation of China under Grant 61971025, 62331002. An earlier version of this paper was accepted in part at IEEE ICC 2025 \cite{wang2025mine}. \textit{(Corresponding author: Yuanhan Ni.)} }
	
	\thanks{Ping Wang, Zulin Wang, Yuanhan Ni, Yuanfang Ma, and Xiaosi Tian are with the School of Electronic and Information Engineering, Beihang University, Beijing 100191, China (e-mail: wangping\_119@buaa.edu.cn; wzulin@buaa.edu.cn; yuanhanni@buaa.edu.cn; yuanfangma@buaa.edu.cn; xiaosi\_tian@buaa.edu.cn). }

	\thanks{
	 Qu Luo and  Pei Xiao  are  with the  5G \& 6G  Innovation Centre, University of Surrey, U. K. (email: q.u.luo@surrey.ac.uk; p.xiao@surrey.ac.uk). }    } 
	
	

\markboth{To be submitted to the IEEE Transactions on Wireless Communications}
	{Shell \MakeLowercase{\textit{et al.}}: A Sample Article Using IEEEtran.cls for IEEE Journals}
	
	
	\maketitle
	
	\begin{abstract}
		
	
	
	
	Affine frequency division multiplexing (AFDM) has garnered significant attention due to its superior performance in high-mobility scenarios, coupled with multiple waveform parameters that provide greater degrees of freedom for system design. This paper introduces a novel secure affine frequency division multiplexing (SE-AFDM) system, which advances prior designs by dynamically varying an AFDM pre-chirp parameter to enhance physical-layer security. 
	In the SE-AFDM system, the pre-chirp parameter is dynamically generated from a codebook controlled by a long-period pseudo-noise (LPPN) sequence.
	Instead of applying spreading in the data domain, our parameter-domain spreading approach provides additional security while maintaining reliability and high spectrum efficiency.
	We also propose a synchronization framework to solve the problem of reliably and rapidly synchronizing the time-varying parameter in fast time-varying channels.
	The theoretical derivations prove that unsynchronized eavesdroppers cannot eliminate the nonlinear impact of the time-varying parameter and further provide useful guidance for codebook design.
	Simulation results demonstrate the security advantages of the proposed SE-AFDM system in high-mobility scenarios, while our hardware prototype validates the effectiveness of the proposed synchronization framework.

	\end{abstract}

	\begin{IEEEkeywords}
	Physical layer security (PLS), Affine frequency division multiplexing (AFDM), long-period Pseudo-noise (LPPN) sequences, Synchronization.
	\end{IEEEkeywords}
	
	\section{Introduction} 
	
	
	
	
	
	

	\IEEEPARstart{S}{ixth} generation (6G) wireless communication networks are expected to support high mobility and wide area coverage scenarios, offering not only low latency of $0.1$ ms, high reliability of up to $99.99999\%$, high spectrum efficiency tripled compared to 5G, and so on, but also enhanced security \cite{ITU2023}.
	With ubiquitous connectivity, extended coverage and increased terminals expose more sensitive information to eavesdropping risks in open wireless communication environments. Furthermore, low latency and high spectrum efficiency in hyper reliable and low-latency communications
	(HRLLC) constrain the application of security strategies with high latency and high complexity \cite{ara2024physical}. Therefore, the security of 6G has attracted extensive research.
	
	Encryption implemented at the network or application layer is a well-known strategy for wireless communication security, which has been widely used in the military, medicine, and other fields. However, key distribution remains a critical challenge, particularly in decentralized and heterogeneous networks \cite{solaija2022towards}. Moreover, the high computational complexity of encryption and decryption may cause extra latency and limited throughput \cite{chen2019physical}. Thus, it is an open question for implementing encryption to meet the requirements of low latency and peak throughput in 6G, especially in networks with limited computational capabilities and constrained terminal sizes \cite{liu2022ensuring}.  
	
	
	Besides the network layer and application layer, the physical layer (PHY) can also provide wireless communication security, i.e., physical layer security (PLS) \cite{wyner1975wire}. Owing to the lower complexity compared with encryption, PLS techniques have attracted considerable research attention, including PHY key generation \cite{gao2020lightweight}, artificial noise (AN) \cite{gu2019secrecy},  secure waveform design \cite{xu2021waveform}, etc. For example, the study in \cite{gao2020lightweight} demonstrated that PHY keys can be generated from legitimate channel state information (CSI), thereby avoiding key distribution through the use of channel reciprocity.
	However, when a strong correlation exists between the legitimate and the eavesdropping channels, security performance may be degraded \cite{edman2011passive}. AN is another CSI-based method to enhance communication security. In \cite{gu2019secrecy}, based on the CSI of both the legitimate receiver and eavesdroppers, additional AN non-orthogonal to the legitimate channel was used to further improve the secrecy capacity of the communication system. Nonetheless, AN comes at the expense of precious transmission power \cite{zou2016survey}.
	
	

	
	
	

	
	
	Among various PLS techniques, secure waveform design has drawn increasing attention since it can improve communication security without incurring extra overhead.
	A representative example is the Global Positioning System (GPS), where direct sequence spread spectrum (DSSS) is applied in the data domain with a year-scale long-period pseudo-noise (LPPN) sequence, thereby providing inherent communication security.
    The essence of GPS security is that eavesdroppers are unable to synchronize with the LPPN sequence used by the transmitter \cite{becker2010secure}, while the legitimate receiver can achieve the synchronization.
	However, data-domain spread spectrum may result in a substantial reduction of spectrum efficiency.
	To address this limitation, a secure waveform based on orthogonal frequency division multiplexing (OFDM) has been proposed, benefiting from its inherent high spectral efficiency.
    In \cite{xu2021waveform}, it was revealed that wireless communication security can be enhanced by modifying the subcarrier spacing via an improved spectrum efficient frequency division multiplexing technique, thereby ensuring that eavesdroppers fail to recover the signal while legitimate receivers succeed.
    However, the bit error rate (BER) performance of the OFDM-based waveform deteriorates due to the intercarrier  interference (ICI) induced by fast-varying channels under high-mobility scenarios \cite{yuan2024papr}.
    

    


	
	
	
	To achieve both reliability and security in high-mobility scenarios, various orthogonal time frequency space (OTFS)-based secure waveforms have been investigated \cite{sun2021orthogonal,sun2021secure}, taking advantage of the ability of OTFS to achieve full diversity in fast time-varying channels \cite{OTFShadani2017orthogonal}.
	Specifically, by spreading the information symbols in either the delay or Doppler domain, a secure OTFS-based waveform, namely DS-OTFS, was developed to achieve PLS at the cost of reduced spectrum efficiency \cite{sun2021orthogonal}.
	Moreover, a rotated orthogonal time frequency space (R-OTFS) waveform was proposed to enhance security by rotating information symbols based on the legitimate channel, thereby reducing the signal-to-interference-plus-noise ratio (SINR) for the eavesdropper \cite{sun2021secure}.
	However, for R-OTFS, once a strong correlation exists between the legitimate and the eavesdropping channels, the security may be compromised.
	Furthermore, due to the excessive pilot overhead caused by the two-dimensional (2D) structure of OTFS, OTFS-based secure waveforms improve communication security with a loss of spectrum efficiency.
	

	
	
	
	
	Recently, affine frequency division multiplexing (AFDM) with one-dimensional (1D) pilots has emerged as a promising solution for high-mobility communications, owing to its reliability, efficiency, and greater design flexibility \cite{bemani2023affine}.
	By adjusting two pre-chirp parameters, i.e., $c_1$ and $c_2$, AFDM can be fully compatible with OFDM, which makes AFDM regarded as an attractive candidate for 6G \cite{zhou2024overview}. AFDM-based research has investigated improving the reliability and spectrum efficiency of communications by tuning $c_1$, such as channel estimation \cite{yin2022pilot}, equalization \cite{bemani2022low}, sparse code multiple access \cite{luo2024afdm},  multiple-input multiple-output system \cite{yin2024diagonally}, integrated sensing and communications \cite{ni2025integrated}, etc. Meanwhile, several studies have investigated adjusting parameter $c_2$ to reduce PAPR \cite{yuan2024papr} and to enable index modulation \cite{tao2025affine,zhu2023design,liu2025pre}.






    Since adjusting the pre-chirp parameter $c_2$ to enhance security is an inherent advantage of AFDM,
	in parallel with our earlier conference version \cite{wang2025mine}, a few studies have begun to focus on enhancing communication security by tuning the AFDM parameter $c_2$ \cite{rou2025chirp,rou2025chirp2,tek2025novel}.
	In \cite{rou2025chirp} and \cite{rou2025chirp2}, it was shown that communication security can be enhanced by sharing a fixed permutation of $c_2$ values across subcarriers between the transmitter and the legitimate receiver, under the assumption that the eavesdropper is unaware of the permutation. However, the static parameter set poses a potential risk of inference by eavesdroppers.
    In \cite{tek2025novel}, channel reciprocity was exploited to enhance communication security by generating dynamic $c_2$ values from the time-domain channel matrix. In contrast to \cite{rou2025chirp}, $c_2$ in \cite{tek2025novel} is dynamically varied across different subcarriers of different AFDM symbols.  Nevertheless, the security performance in \cite{tek2025novel} may be degraded when the legitimate and eavesdropping channels are highly correlated \cite{edman2011passive}.  Meanwhile, 
    the channel reciprocity may not hold under the frequency-division duplex (FDD) scheme,  thereby making CSI feedback particularly challenging in fast time-varying channels.
    Although initial attempts have been made to modify $c_2$  for enhancing security, the dynamic adjustment mechanism and the corresponding synchronization method of $c_2$ in fast time-varying channels remain largely open problems.

	To this end, this paper presents a secure affine frequency division multiplexing (SE-AFDM) system that varies and synchronizes the parameter $c_2$ using an LPPN sequence to ensure reliability, security, and high spectrum efficiency.
	The time-varying $c_2$ is generated from a codebook controlled by the LPPN sequence, 
	which allows legitimate receivers to synchronize $c_2$ with low complexity, while preventing eavesdroppers from achieving synchronization. 
	The impacts of dynamic $c_2$ on both the legitimate receivers and the eavesdroppers are analyzed, providing valuable insights for codebook design. To synchronize dynamic $c_2$ at the legitimate receivers, a synchronization framework is proposed, which includes a frame structure and a synchronization strategy. 
	Moreover, the proposed framework is verified through over-the-air propagation experiments using a software-defined radio (SDR) platform.
	Simulation and experimental results confirm the security of the SE-AFDM system and validate the effectiveness of the proposed synchronization framework.
	For clarity, we summarize our contributions as follows:
	\begin{itemize}
		\item We propose an SE-AFDM wireless communication system by introducing an LPPN sequence to dynamically generate the parameter $c_2$. In the SE-AFDM system, only a few fixed configuration parameters of the LPPN sequence generator remain confidential, whereas the codebook of $c_2$ and the state parameters of the LPPN sequence generator are public for both legitimate receivers and eavesdroppers.
        Unlike applying DSSS in the data domain, the control of parameter $c_2$ via an LPPN sequence can be regarded as a parameter-domain spreading, which ensures high reliability and security while maintaining spectral efficiency.

        \item We derive the impact of time-varying $c_2$ on the security performance.
        We prove that the time-varying $c_2$ can be eliminated at the legitimate receiver with synchronized $c_2$. However, eavesdroppers cannot separate the time-varying $c_2$ from the random information symbols without synchronization of the dynamic $c_2$, thereby enhancing communication security. Moreover, we reveal that the SINR at the eavesdropper is a function of the variation range and the number of candidate values of $c_2$, thereby providing useful guidelines for designing the codebook.

        \item 
         We design a synchronization framework with low complexity for dynamic $c_2$ at legitimate receivers. In such a framework, a frame structure is designed to facilitate $c_2$ synchronization while simultaneously ensuring the secure transmission of valuable information. Based on this, a corresponding synchronization strategy is proposed, where the legitimate receiver achieves $c_2$ synchronization by synchronizing the LPPN sequence. Experimental results verify the effectiveness of our proposed synchronization framework.

	\end{itemize}
	
	

	The rest of this paper is organized as follows. Section \ref{sec:section2} briefly introduces the AFDM communication model and the generation principle of the LPPN sequence. In Section \ref{sec:section3}, we present an SE-AFDM system. The security of the SE-AFDM system is theoretically analyzed in Section \ref{sec:security}. Section \ref{sec:syn_strategy} proposes a synchronization framework for the SE-AFDM system. 
	Simulation and experimental results verify the security of the proposed SE-AFDM system and the effectiveness of the proposed synchronization framework in Section \ref{sec:results}.
	Finally, Section \ref{sec:conclusion} concludes this paper.

	
	


	\textit{Notation}: Throughout the paper, $\mathbf{X}$, $\mathbf{x}$, and $x$ denote a matrix, vector, and scalar, respectively. $\mathbf{I}_{N}$ denotes an $N \times N$ identity matrix.
	$\left\lfloor  \cdot  \right\rfloor $, $\langle\cdot\rangle_{N}$, $\vert\cdot\vert$, $\odot$, $\oplus$,  $(\cdot)^{*}$, $(\cdot)^{T}$ and $(\cdot)^{{H}}$ are the floor function, the modulo $N$ operator, cardinality operator, the Hadamard product, the modulo-2 addition operator, the conjugate operation, the transpose, and the Hermitian transpose, respectively. ${\mathbb{E}(\cdot)}$ is the expectation operator.
	${\rm diag} (\mathbf{x})$ forms a diagonal matrix with the elements of $\mathbf{x}$ on the main diagonal. The operator $\Re$ is used to extract the real part of a complex number.   ${\rm gcd}(a,b)$ is the greatest common divisor of $a$ and $b$. $\mathcal{M}_{R \text {-QAM }}(\cdot) \text { denotes the } R \text {-QAM mapping function}$. $\operatorname{vec}(\cdot)$ denotes the vectorization operator.    
	
	


	
	

	\section{Preliminaries} \label{sec:section2}
	
	
	\subsection{AFDM Communication Model}
	
	Firstly, the AFDM model proposed in \cite{bemani2023affine} is briefly reviewed. Let $\mathbf{x}$ denote an $N {\times} 1$ vector of quadrature amplitude modulation (QAM) symbols. By performing the $N$-point inverse discrete affine Fourier transform (IDAFT), $\mathbf{x}$ is mapped from the discrete affine Fourier transform (DAFT) domain to the time domain, i.e., \cite{bemani2023affine}
	\begin{equation} \label{sn}
		{s}\left[{n}\right] = \frac{1}{{\sqrt N }}\sum\limits_{m = 0}^{N - 1} {{x}\left[ {m} \right]} {e^{j2\pi \left( {{c_1}{n^2} + \frac{mn}{N} + {c_2}{m^2}} \right)}} ,
	\end{equation}
	where $c_1$ and $c_2$ are two DAFT chirp parameters, and $n=0,\ldots,N {-} 1$. After adding a chirp-periodic prefix (CPP) of length $N_{\rm cp}$, \eqref{sn} is rewritten as \cite{bemani2023affine}	
	\begin{equation}\label{eq:symbol_AFDM}
		{s}\left[ {n} \right] {=} {s}\left[ {n {+} N} \right]{e^{ - i2\pi {c_1}\left( {{N^2} + 2Nn} \right)}},n =  - {N_{\rm cp}}, \ldots , - 1.
	\end{equation}
	
	
	
	Then, the AFDM signal is transmitted over a communication channel with $P$ paths, in which the gain coefficient, time delay and Doppler shift of the $i$-th path are denoted by ${h_i}$, ${\tau _i}$, ${f_{d,i}}$, respectively. The received signal in the time domain is given by 
	[\mycitenum{wu2022integrating}, Eq.~(6)]
	\begin{equation}\label{eq:received_sig_time}
		{r}\left[ n \right] = \sum\limits_{i = 1}^P {{{\tilde h}_i}} {s}\left[ {n - {l_i}} \right]{e^{j2\pi {f_i}n}} + {{w}}_t\left[ n \right],
	\end{equation}
	where $\mathbf{{w}}_t \hspace{-0.5ex}\sim \hspace{-0.5ex}\mathcal {CN}\left( {0, \sigma_{c} ^2\mathbf{I}} \right)$ is an additive white Gaussian noise (AWGN) vector, ${{\tilde h}_i} {=} {h_i}{e^{ - j2\pi {f_{d,i}}{\tau _i}}}$, ${l_i} {=} {{{\tau _i}} \mathord{\left/
			{\vphantom {{{\tau _i}} {{t_s}}}} \right.
			\kern-\nulldelimiterspace} {{t_{\rm s}}}}$, ${f_i} {=} {f_{d,i}}{t_{\rm s}}$ with ${t_{\rm s}}$ being the sampling interval, and $n \hspace{-0.5ex}\in \hspace{-0.5ex}\left[ { - {N_{\rm cp}},N - 1 } \right]$.

	After discarding CPP and performing $N$-point DAFT, the resulting signal in the DAFT domain can be written as \cite{bemani2023affine}
	\begingroup
	\setlength{\abovedisplayskip}{3pt}
	\setlength{\belowdisplayskip}{3pt}
	\begin{equation}
		\mathbf{y} = {\mathbf{H}_{\rm eff}}\mathbf{x} + \mathbf{{w}}_a= \sum\limits_{i = 1}^P {{\tilde h_i}{\mathbf{H}_{i,\rm A}}\mathbf{x}} + \mathbf{{ w}}_a,
	\end{equation}
	\endgroup
	where ${\mathbf{H}_{\rm eff}} {=} \mathbf{A}{\mathbf{H}_{c,t}}\mathbf{A}^{{H}}$ denotes the effective channel matrix and ${\mathbf{H}_{c,t}}$ denotes the time-domain channel matrix, i.e., ${\mathbf{H}_{c,t}} = \sum\limits_{i = 1}^P {{\tilde h_i}{{\bf{\Gamma }}_{{\rm cpp}{_i}}}{\bm{\Delta} _{{f_i}}}{\bm{\Pi} ^{\left({l_i}\right)}}}$, $\bm{\Pi}$ denotes the forward cyclic-shift matrix, ${\bm{\Delta} _{{f_i}}} = {\rm diag}\left( {{e^{ i2\pi {f_i}n}},  n = 0, \ldots ,  N {-} 1} \right)$,  ${\mathbf{H}_{i, \rm A}}=\mathbf{A}{{\bf{\Gamma }}_{{\rm cpp}{_i}}}{\bm{\Delta} _{{f_i}}}{\bm{\Pi} ^{\left({l_i}\right)}}\mathbf{A}^{{H}}$, $\mathbf{A} = {\bm{\Lambda} _{{c_2}}}\mathbf{F}{\bm{\Lambda} _{{c_1}}}$, $\mathbf{{w}}_a=\mathbf{A}\mathbf{w}_t$,  $\mathbf{F}$ is the discrete Fourier transform (DFT) matrix,  ${\bm{\Lambda} _{{c_i}}} {=}  {\rm diag} {\left( {{e^{ - j2\pi c_i{n^2}}},n = 0 ,\ldots ,N {-} 1}, i=1,2 \right)}$, and ${{\bf{\Gamma }}_{{\rm cpp}{_i}}}$ is a diagonal matrix, which is defined as
	\begin{eqnarray}
		{{\bf{\Gamma }}_{{\rm cpp}{_i}}} = {\rm diag}\left( {\left\{ {\begin{array}{*{20}{l}}
					{{e^{ - i2\pi {c_1}\left( {{N^2} - 2N\left( {{l_i} - n} \right)} \right)}}},&{n < {l_i}},\\[-2pt]
					1,&{n \ge {l_i}}.
			\end{array}} \right.} \right),
	\end{eqnarray}
	and ${{H}_{i,\rm A}}\hspace{-0.5ex}\left[ {p,q} \right]$ is given by \cite{bemani2023affine}
	\begin{eqnarray}
		{{H}_{i, \rm A}}\left[ {p,q} \right] = \frac{1}{N}{e^{j\frac{{2\pi }}{N}\left( {N{c_1}l_i^2 - ql_i + N{c_2}\left( {{q^2} - {p^2}} \right)} \right)}}{\mathcal{F}}_i\left[ {p,q} \right],
	\end{eqnarray}
	where ${{\mathcal{F}}_i}\left[ {p,q} \right] {=} \frac{{{e^{ - j2\pi \left( {p - q - {\nu_i} + 2N{c_1}{l_i}} \right)}} - 1}}{{{e^{ - j\frac{{2\pi }}{N}\left( {p - q - {\nu_i} + 2N{c_1}{l_i}} \right)}} - 1}}$ with ${\nu _i} = N{f_i} = \frac{{{f_{d,i}}}}{{\Delta f}} = {\alpha _i} + {a_i} \in \left[ { - {\nu _{\max }}, {\nu _{\max }}} \right]$ denotes the Doppler shift normalized by the subcarrier spacing ${\Delta f}$, $p = 0,\ldots N-1$, $q = 0,\ldots N-1$, ${\alpha _i} {\in} \left[ { - {\alpha _{\max }},{\alpha _{\max }}} \right]$ and ${a_i} {\in} \left( { - \frac{1}{2},\frac{1}{2}} \right]$ correspond to the integral and fractional part of ${\nu _i}$, respectively. 
	
%

	\subsection{Generation Principle of LPPN Sequence}
	
	Following the generation mechanism of the GPS P-code \cite{IS-GPS}, we next review the LPPN sequence generation method based on the precession and modulo-2 addition of short-cycled sequences.
	The sequence nomenclature in this section follows the specifications outlined in \cite{IS-GPS}.
	As shown in Fig. \ref{fg:P_generation}, the LPPN sequence $\mathrm{L}$ is generated from sequences $\mathrm{X1}$ and $\mathrm{X2}$ via precession and modulo-2 addition. The sequence $\mathrm{X1}$ is calculated using short sequences $\mathrm{X1A}$ and $\mathrm{X1B}$, whereas $\mathrm{X2}$ is obtained from short sequences $\mathrm{X2A}$ and $\mathrm{X2B}$.
	


 	\begin{figure}[H]
 	\vspace{-5pt}
 	\centering
 	\includegraphics[width=3.3in]{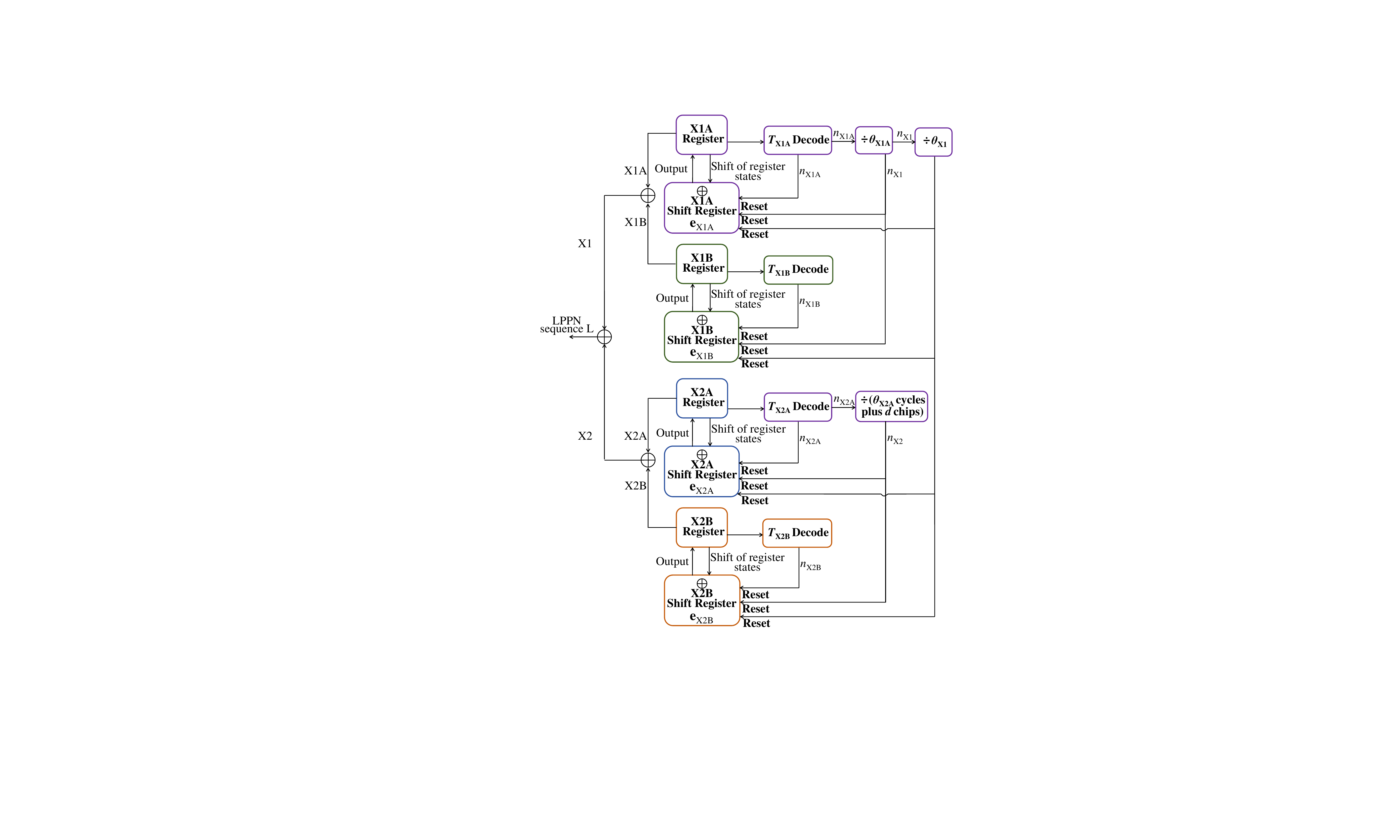}
 	\caption{The mechanism of LPPN sequence generator \cite{IS-GPS}.  
 		\label{fg:P_generation}}
 	\vspace{-5pt}
 \end{figure}

	The short sequences $\mathrm{X1A}$, $\mathrm{X1B}$, $\mathrm{X2A}$, and $\mathrm{X2B}$ are generated by four $S$-stage shift registers, respectively.
	The polynomial coefficient vector of the shift register of ${\beta}$ sequence, where $\beta \in \{ {{\mathrm{X1A}, \mathrm{X1B}, \mathrm{X2A}, \mathrm{X2B}}}\}$, is represented by ${\bf{e}}_{\beta} = [e_{{\beta},1}, \ldots, e_{\beta,S}]^T\in {\left\{ {0,1} \right\}}^{S \times 1}$.
	 Accordingly, the state vector of the shift register of ${\beta}$ sequence at the $k$-th time step is denoted as ${\bf{s}}^k_{\beta} = [s^k_{{\beta},1}, \ldots, s^k_{{\beta},S}]^T\in {\left\{ {0,1} \right\}}^{S \times 1}$.
    All four registers perform feedback computation and state shifting. Taking one cycle of the  shift register of the $\mathrm{X1A}$ sequence as an example, the output at the $k$-th time step is given by the register state at the $S$-th stage, denoted as $\mathrm{X} 1 \mathrm{A}[k] = s^k_{{\rm X1A},S}$. At time step $k+1$, the new state value of the first-stage register is computed by \cite{IS-GPS} 
   \begingroup
   \setlength{\abovedisplayskip}{3pt}
   \setlength{\belowdisplayskip}{3pt}
    \begin{eqnarray} \label{eq:X1A}
   	s^{k+1}_{{\rm X1A},1} = \left\langle \sum\limits_{i = 1}^{S} e_{{\rm X1A},i} \cdot s^{k}_{{\rm X1A},i} \right\rangle_2. 
   \end{eqnarray}
   \endgroup
   Then, the register states are shifted by $s^{k+1}_{{\rm X1A},i+1} = s^{k}_{{\rm X1A},i}$ for $i=1,\ldots,S-1$, resulting in the new state vector ${\bf{s}}^{k+1}_{{\rm X1A}} = [s^{k+1}_{{\rm X1A},1}, \ldots, s^{k+1}_{{\rm X1A},S}]^T$. The chip of the $\mathrm{X1A}$ sequence at time step $k+1$ is represented as 
   \begingroup
    \setlength{\abovedisplayskip}{5pt}
   \setlength{\belowdisplayskip}{4pt}
   \begin{eqnarray}
   \mathrm{X} 1 \mathrm{A}[k+1]=  s^{k+1}_{{\rm X1A},S}. 
   \end{eqnarray}
   \endgroup

   To achieve the precession between $\mathrm{X1A}$ and $\mathrm{X1B}$ sequences as well as between the $\mathrm{X2A}$ and $\mathrm{X2B}$ sequences, the cycles of the $\mathrm{X1A}$, $\mathrm{X1B}$, $\mathrm{X2A}$ and $\mathrm{X2B}$ sequences are typically shortened and denoted as $T_{\rm X1A}$, $T_{\rm X1B}$, $T_{\rm X2A}$ and $T_{\rm X2B}$.  
   The corresponding cycle counts are denoted by $n_\beta \!\in \!\{ {0, \ldots ,{\theta _{{\beta}}}} \}$ with $\beta \! \in\! \{ {{\mathrm{X1A}, \mathrm{X1B}, \mathrm{X2A}, \mathrm{X2B}}}\}$. 
   Typically, $T{_{{\rm{X1A}}}} = T{_{{\rm{X2A}}}} < T{_{{\rm{X1B}}}} = T{_{{\rm{X2B}}}}$, ${\rm gcd}$$(T_{\rm X1A},T_{\rm X1B}) =$ ${\rm gcd}$$(T_{\rm X2A},T_{\rm X2B}) =1$,  ${\theta _{{\rm{X1A}}}} = {\theta _{{\rm{X2A}}}} \le {T_{{\rm{X2B}}}}$, ${\theta _{{\rm{X1B}}}} = \left\lfloor {\frac{{{T_{{\rm{X1A}}}}{\theta _{{\rm{X1A}}}}}}{{{T_{{\rm{X1B}}}}}}}\right\rfloor,$ and ${\theta _{{\rm{X2B}}}} = \left\lfloor {\frac{{{T_{{\rm{X2A}}}}{\theta _{{\rm{X2A}}}}}}{{{T_{{\rm{X2B}}}}}}} \right\rfloor$ \cite{IS-GPS}.

   	\begin{figure*}[!t]  
	
	\begin{align}
		{\rm X1}[k] &=  
		\begin{cases}
			{\rm X1A}[k - T_{\rm X1}n_{\rm X1} - T_{\rm X1A}n_{\rm X1A}] \oplus 
			{\rm X1B}[k - T_{\rm X1}n_{\rm X1} - T_{\rm X1B}n_{\rm X1B}], & {\rm{C}}.1, \\
			{\rm X1A}[k - T_{\rm X1}n_{\rm X1} - T_{\rm X1A}n_{\rm X1A}] \oplus 
			{\rm X1B}[T_{\rm X1B}], & {\rm{C}}.2.
		\end{cases} \label{X1} \\  		 
		{\rm X2}[k] &= 
		\begin{cases}
			{\rm X2A}[k - T_{\rm X2}n_{\rm X2} - T_{\rm X2A}n_{\rm X2A}] \oplus
			{\rm X2B}[k - T_{\rm X2}n_{\rm X2} - T_{\rm X2B}n_{\rm X2B}], & {\rm{C}}.3, \\
			{\rm X2A}[k - T_{\rm X2}n_{\rm X2} - T_{\rm X2A}n_{\rm X2A}] \oplus
			{\rm X2B}[T_{\rm X2B}], & {\rm{C}}.4, \\
			{\rm X2A}[T_{\rm X2A}] \oplus {\rm X2B}[T_{\rm X2B}], & {\rm{C}}.5.
		\end{cases} \label{X2}
	\end{align}
	\hrule
\end{figure*}

   Through the precession and modulo-2 addition between the $\mathrm{X1A}$ and $\mathrm{X1B}$ sequences, the $\mathrm{X1}$ sequence is generated by \eqref{X1}, which is shown at the top of this page. In \eqref{X1}, ${\rm{C}}.1$ and ${\rm{C}}.2$ are respectively given by  
    \begingroup
   \setlength{\abovedisplayskip}{7pt}
   \setlength{\belowdisplayskip}{7pt}
   \begin{subequations} \label{cons_13}
   	\begin{align}
   		&{\rm{C}}.1: k\! \le\! {T_{{\rm{X1}}}}{n_{{\rm{X1}}}} \!\! + \!\!{T_{{\rm{X1B}}}}	{\theta _{{\rm{X1B}}}},\label{cons:1}\\
   		&{\rm{C}}.2:{T_{{\rm{X1}}}}{n_{{\rm{X1}}}} \!\!+ \!\!{T_{{\rm{X1B}}}}{\theta _{{\rm{X1B}}}} \! <\! k \!\le\! {T_{{\rm{X1}}}}{n_{{\rm{X1}}}}\!\! +\!\! {T_{{\rm{X1}}}},\label{cons:2}
   	\end{align}
   \end{subequations} 
    \endgroup
   where $k \in \left\{ {0, \ldots ,{T_{\rm{L}}-1}} \right\}$ is the chip index, $n_{\rm X1} \in \left\{ {0,\ldots,{\theta _{{\rm{X1}}}}} \right\}$ is the count of the $\mathrm{X1}$ sequence cycles, $T_{\rm X1} = T_{\rm X1A}\theta_{\rm X1A}$ denotes the cycle of the $\mathrm{X1}$ sequence, and ${T_{\rm{L}}} = {T_{{\rm{X1}}}}{\theta _{{\rm{X1}}}}$ represents the cycle of the LPPN sequence. After each generation of $\mathrm{X1}$ sequence, the shift registers of the $\mathrm{X1A}$ and $\mathrm{X1B}$ sequences are reinitialized. When $n_{\rm X1}$ reaches ${\theta _{{\rm{X1}}}}$, a full cycle of the LPPN sequence is completed and all counters and registers are reset. 
   Notably, as illustrated in \eqref{X1} and \eqref{cons:2}, once the shift register of the $\mathrm{X1B}$ sequence  completes its ${\theta _{{\rm{X1B}}}}$-th cycle within each $\mathrm{X1}$ sequence period, it remains in its final state until the start of the next $\mathrm{X1}$ sequence cycle. To avoid periodic repetition in the $\mathrm{X1}$ sequence, it is generally required that $\theta{_{{\rm{X1A}}}} - \frac{{T{_{{\rm{X1}}}}}}{{T{_{{\rm{X1B}}}}}}  \le  T{_{{\rm{X1B}}}} - T{_{{\rm{X1A}}}}$ \cite{IS-GPS}.
    

   
    
    

     Similarly, as given in \eqref{X2}, $\mathrm{X2}$ sequence is generated by the modulo-2 addition of the $\mathrm{X2A}$ and $\mathrm{X2B}$ sequences, where  
      \begingroup
     \setlength{\abovedisplayskip}{7pt}
     \setlength{\belowdisplayskip}{7pt}
     \begin{subequations} \label{Cons:X2}
    \begin{align}
     &{\rm{C}}.3:k \!\! \le\!\! {T_{{\rm{X2}}}}{n_{{\rm{X2}}}}\! +\! {T_{{\rm{X2B}}}}{\theta _{{\rm{X2B}}}}, \label{X2a}\\
     &{\rm{C}}.4:{T_{{\rm{X2}}}}{n_{{\rm{X2}}}}\! +\! {T_{{\rm{X2B}}}}{\theta _{{\rm{X2B}}}} \!\!<\! \!k \le {T_{{\rm{X2}}}}{n_{{\rm{X2}}}}\! + \!{T_{{\rm{X2A}}}}{\theta _{{\rm{X2A}}}},\label{X2b}\\
     &{\rm{C}}.5:{T_{{\rm{X2}}}}{n_{{\rm{X2}}}}\! +\! {T_{{\rm{X2A}}}}{\theta _{{\rm{X2A}}}}\! \!< \!\! k \le {T_{{\rm{X2}}}}{n_{{\rm{X2}}}} \!+\! {T_{{\rm{X2}}}}\label{X2c},
     \end{align}
    \end{subequations}
    \endgroup
   where $T_{\rm X2} =  {{T_{{\rm{X2A}}}}{\theta _{{\rm{X2A}}}} + d}$ denotes the cycle of the $\mathrm{X2}$ sequence, ${n_{{\rm{X2}}}} \in \left\{ {0, \ldots ,{\theta _{{\rm{X2}}}}}\right\}$ is the count of the $\mathrm{X2}$ sequence cycles, and ${\theta _{{\rm{X2}}}} = \left\lfloor {\frac{{{T_{{\rm{X1}}}}{\theta _{{\rm{X1}}}}}}{{{T_{{\rm{X2}}}}}}}\right\rfloor.$ 
    When $k$ satisfies \eqref{X2a}, $\mathrm{X2}$ sequence is generated by the modulo-2 addition of the $\mathrm{X2A}$ and $\mathrm{X2B}$ sequences during precession.
    Once $k$ meets \eqref{X2b}, the shift register of the $\mathrm{X2B}$ sequence holds its final state. In addition, to introduce the precession between the $\mathrm{X1}$ and $\mathrm{X2}$ sequences, the reset of the shift registers of the $\mathrm{X2A}$ and $\mathrm{X2B}$ sequences is delayed by $d$ chips, as shown in \eqref{X2c}.   Typically, ${\rm gcd} \left( {{T_{{\rm{X1}}}},{T_{{\rm{X2}}}}} \right) = 1$ \cite{IS-GPS}. To ensure a long cycle of the final sequence, ${\theta _{{\rm{X1}}}} = {T_{{\rm{X2}}}}$ and ${\theta _{{\rm{X2}}}} = {T_{{\rm{X1}}}}$.

	By performing modulo-2 addition between the $\mathrm{X1}$ and $\mathrm{X2}$ sequences, the $k$-th chip of the LPPN sequence is given by
	\begin{equation} \label{P}
		\mathrm{L}[k]=\mathrm{X} 1\left[k\right] \oplus \mathrm{X} 2\left[k\right], k = 0, \ldots ,{T_{{\rm{L}}}-1}.
\end{equation}


The parameters of the LPPN sequence generator can be categorized into configuration parameters and state parameters. The configuration parameters include ${\bf{e}}_{\rm X1A}$, ${\bf{e}}_{\rm X1B}$, ${\bf{e}}_{\rm X2A}$, ${\bf{e}}_{\rm X2B}$, $T_{\rm X1A}$, $T_{\rm X1B}$, $T_{\rm X2A}$, $T_{\rm X2B}$, ${\theta _{{\rm{X1A}}}}$, ${\theta _{{\rm{X2A}}}}$, ${\theta _{{\rm{X1}}}}$, and $d$, while the state parameters include the counter values ${n_{{\rm{X1A}}}}$, ${n_{{\rm{X1B}}}}$, ${n_{{\rm{X2A}}}}$, ${n_{{\rm{X2B}}}}$, ${n_{{\rm{X1}}}}$, ${n_{{\rm{X2}}}}$ and the current register states ${\bf{s}}^k_{\rm{X1A}}$, ${\bf{s}}^k_{\rm{X1B}}$, ${\bf{s}}^k_{\rm{X2A}}$, ${\bf{s}}^k_{\rm{X2B}}$. From \eqref{eq:X1A}-\eqref{P}, it can be observed that  eavesdroppers cannot reconstruct the LPPN sequence without knowledge of the configuration parameters, even if the state parameters are revealed, which motivates us to propose the SE-AFDM system based on the LPPN sequence.

	\section{Secure Affine Frequency Division Multiplexing} \label{sec:section3}
	
In this section, we introduce an SE-AFDM system in which $c_2$ is dynamically tuned based on codebook indices driven by the LPPN sequence. Then, we derive the input-output relationship of SE-AFDM at the legitimate receiver. Based on theoretical analysis, we show that the impact of the time-varying parameter $c_2$ can be compensated using the derived effective channel matrix for the legitimate user without performance degradation compared with the existing AFDM system.
	
	
	
	
	\subsection{Proposed SE-AFDM System}
	
	In this paper, we use Alice, Bob and Eve to denote the transmitter, the legitimate receiver, and the eavesdropper, respectively.


	
	
	\subsubsection{ Modulation with dynamic $c_2$ at Alice }
	
	\begin{figure*}[htbp]
		\centering	
		\includegraphics[width=7.0in]{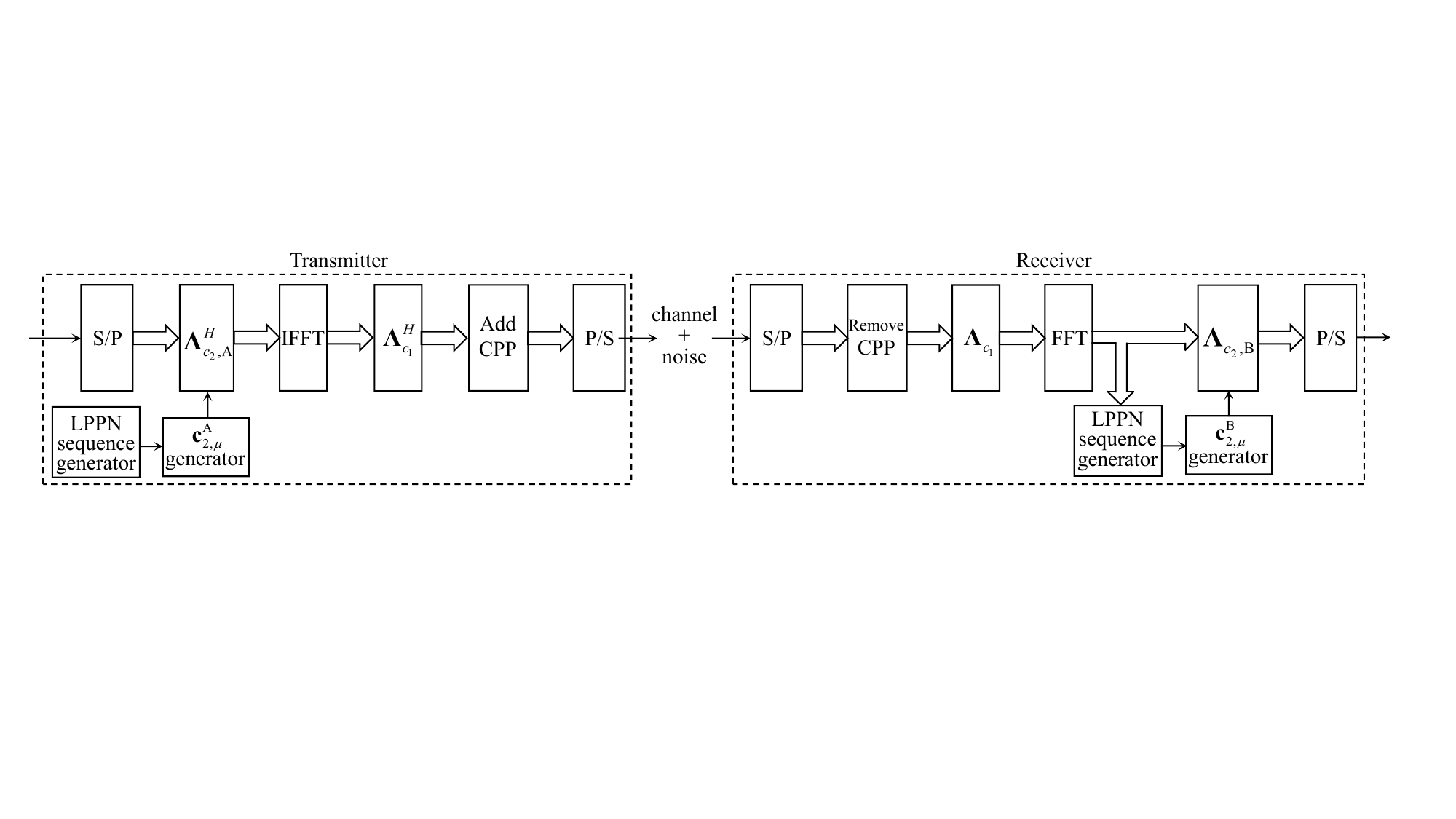}
		\caption{Block diagram of SE-AFDM communication system. 
			\label{fg:SE_AFDM}}
		\vspace{-10pt}
	\end{figure*}
	
	
	
	The corresponding block diagram of the proposed SE-AFDM system is shown in Fig. \ref{fg:SE_AFDM}.
	Firstly, the information symbol vector $\mathbf{x}$ is multiplied by a dynamic security-aware matrix ${\bm{\Lambda} ^{H}_{{c_2,\rm A}}} {=} {\rm diag}\left( {{e^{ j2\pi {c}^{\rm A}_{2,\mu}\left[m\right]{m^2}}},m = 0, \ldots ,N {-} 1} \right)$, where $\mathbf{c}^ {\rm A}_{2,\mu}$ is an ${N \times 1}$ parameter vector at Alice, and ${c}^{\rm A}_{2,\mu}\left[m\right]$ denotes the parameter $c_2$ corresponding to the $m$-th subcarrier of the $\mu$-th AFDM symbol. As shown in Fig. \ref{fg:LPPN_index}, ${c}^{\rm A}_{2,\mu}\left[m\right]$ is generated from a codebook $\mathcal{C}_2 $ under the control of the LPPN sequence. 
	The codebook $\mathcal{C}_2$ is pre-designed by uniformly discretizing $c_2$ over [$-c_{2,\rm max}$, $c_{2,\rm max}$], where $c_{2, \max }$ is the maximum value of $c_2$. Then, we have the codebook 
	 \begingroup
	\setlength{\abovedisplayskip}{3pt}
	\setlength{\belowdisplayskip}{3pt}
	\begin{equation}
		\mathcal{C}_2 = \left\{ {{A_0},{A_1}, \cdots ,{A_{M-1}}} \right\},
	\end{equation}
	\endgroup
	where $M$ denotes the number of $c_2$ candidates, i.e., $M= \vert \mathcal C_2 \vert$. In this paper, the $k$-th element ${A_k}$ of the codebook is obtained by  
    \begingroup
	\setlength{\abovedisplayskip}{3pt}
	\setlength{\belowdisplayskip}{3pt}
	\begin{equation} \label{eq:Az}
		A_{k}={\left\{\begin{array}{ll}
				-c_{2, \max }, &M=1, \\[-2pt]
				-c_{2, \max }+k {\Delta_{\rm A}}, &M \geq 2,
			\end{array}\right.}
	\end{equation} 
	\endgroup
	where $k = 0, \ldots,M-1$, and ${\Delta_{\rm A}}=\frac{{2{c_{2,\max }}}}{M-1}$ denotes the codebook interval. By sequentially truncating the LPPN sequence $\rm L$ every ${\log _2}M$ elements, the $\varphi$-th truncated LPPN sequence can be written as
	\begin{equation} \label{eq:L_M} 
		{\mathrm{L}_{\varphi} =\left\{{\rm L}[{\varphi-\log_{2} M+1}],\ldots, {\rm L}[\varphi]\right\},}
	\end{equation}
	where $\varphi \! =\! \mu N \!+ \! m \! \in \! \left\{ {0, \ldots ,{T_{\rm{L}}-1}} \right\}$, ${{\rm {L}}[i]} \!= \!1$ for $i\! \in \!\{  \!- ({\log _2}M) \!+\! 1, \ldots ,-1\}$, and ${\log _2}M$ is the length of the truncated LPPN sequence. To generate the index of ${c}^{\rm A}_{2,\mu}\left[m\right]$, we convert the binary sequence $\mathrm{L}_{\varphi}$ to a decimal number, which is given by 
	 \begingroup
	\setlength{\abovedisplayskip}{3.5pt}
	\setlength{\belowdisplayskip}{3.5pt}  
	\begin{equation} \label{z}
		k = \sum\limits_{z = 0}^{{{\log }_2}M-1} {{\rm{L}}_\varphi[z]}  {2^{{{\log }_2}M -1- z}}.  
	\end{equation}  
	\endgroup 
With the index $k$ obtained from \eqref{z}, ${c}^{\rm A}_{2,\mu}\left[m\right]$ is selected as
	 \begingroup
\setlength{\abovedisplayskip}{3.5pt}
\setlength{\belowdisplayskip}{3.5pt}
	\begin{equation}
		{{c}}_{2,\mu}^{\rm A}[m] = {A_k},
	\end{equation}   
		\endgroup  

	 As the LPPN sequence is generated continuously, different $c_2$ values are dynamically produced. The codebook $\mathcal{C}_2$ is available to Alice, Bob, and even Eve. The mapping between the LPPN sequence and the dynamic $c_2$ indicates that synchronizing $c_2$ between Alice and Bob essentially reduces to synchronizing their LPPN sequences.



	
    \begin{figure}[htbp]
		\vspace{-5pt}
		\centering
		\includegraphics[width=3.5in]{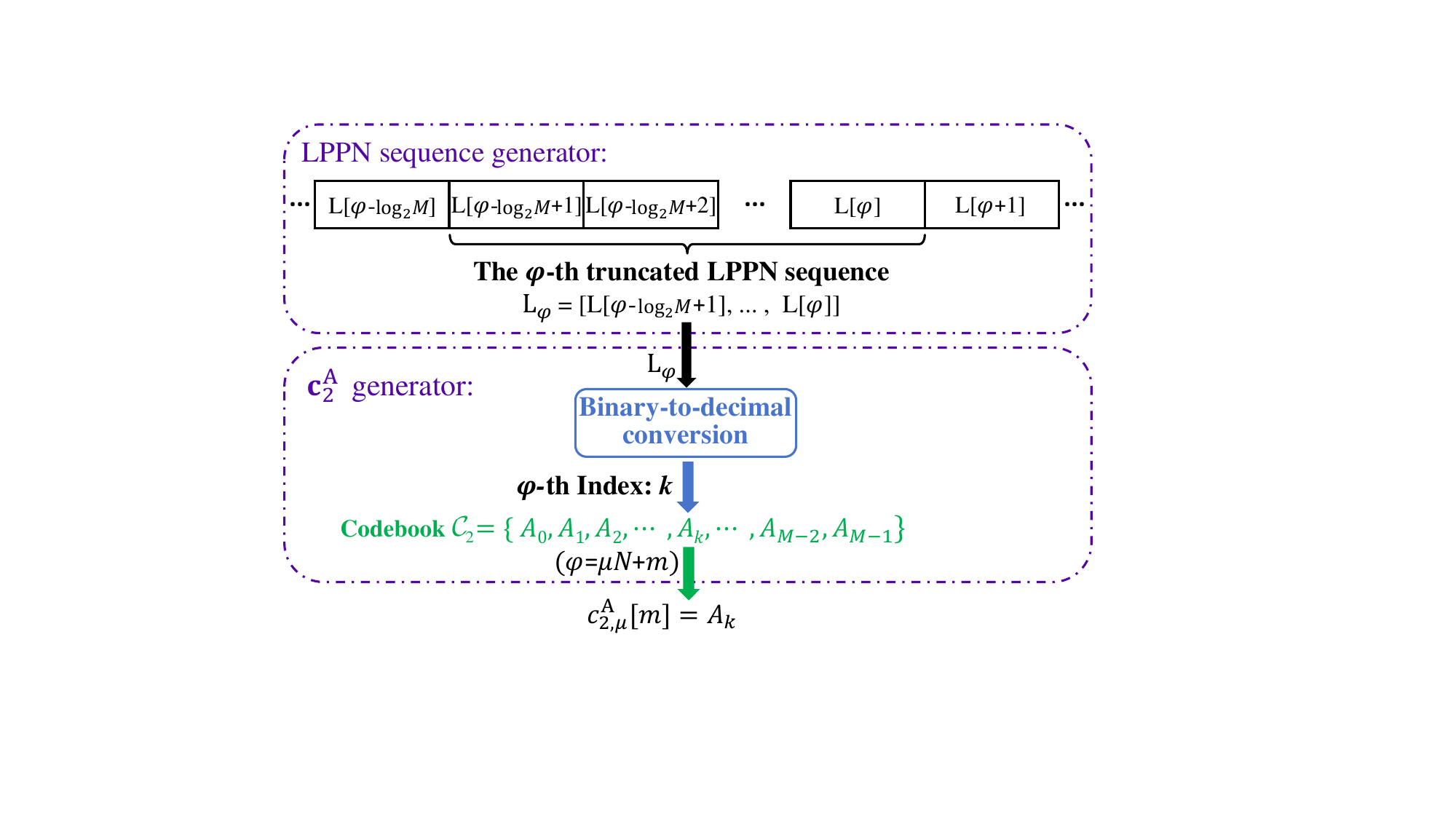}
		\caption{The generation procedure of dynamic $c_2$ based on the LPPN sequence and the codebook.
			\label{fg:LPPN_index}}
		\vspace{-5pt}
	\end{figure}

	
	
	Then, the subsequent operations are the same as the existing AFDM modulation process, i.e., performing IDFT, multiplying by the matrix ${\bm{\Lambda} ^{H}_{{c_1}}}$, and adding a CPP. The resulting SE-AFDM waveform at Alice in the time domain can be written as
	\vspace{-0.5ex}
	\begin{equation} \label{eq:ES_AFDM_Tx}
		{s}_{\rm A}\left[ n \right]=\frac{1}{\sqrt{N}}\sum\limits_{m=0}^{N-1}{{x}}\left[ m \right]{{e}^{j2\pi \left( {{c}_{1}}{{n}^{2}}+{c}^{\rm A}_{2,\mu}\left[m\right]{{m}^{2}}+\frac{mn}{N} \right)}} ,
	\end{equation}
	\vspace{-0.5ex}
	where $n = - {N_{\rm cp}}, \ldots , N- 1$. 
	
	\subsubsection{Demodulation at Bob}

	After transmission over the channel with $P$ paths, where the channel coefficient, time delay, and Doppler shift of the $i$-th path are denoted by ${h_i^{\rm B}}$, ${\tau _i^{\rm B}}$, ${f_{d,i}^{\rm B}}$, respectively, the received time-domain signal vector at Bob is given by
	\begingroup
	\setlength{\abovedisplayskip}{3pt}
	\setlength{\belowdisplayskip}{3pt}
	\begin{equation}\label{eq:received_sig_Bob_time}
		{r}_{\rm B}\left[ n \right] = \sum\limits_{i = 1}^P {{{\tilde h}^{\rm B}_i}} {s}_A\left[ {n - {l_i^{\rm B}}} \right]{e^{j2\pi {f_i^{\rm B}}n}} + {{w}}_{\rm B}\left[ n \right],
	\end{equation}
	\endgroup
	where ${{{\tilde h}^{\rm B}_i}} = {h_i^{\rm B}}{e^{ - j2\pi {f^{\rm B}_{d,i}}{\tau ^{\rm B}_i}}}$,  ${l^{\rm B}_i} {=} {{{\tau^{\rm B}_i}} \mathord{\left/{\vphantom {{{\tau ^{\rm B}_i}} {{t_s}}}} \right.
			\kern-\nulldelimiterspace} {{t_s}}}$, ${f^B_i} {=} {f^B_{d,i}}{t_{\rm s}}$ with ${t_{\rm s}}$ being the sampling interval, and $\mathbf{w}_{{\rm B}}\in {{\mathbb{C}}^{N\times 1}}$ is an AWGN vector with a power spectral density $\sigma _{n,{\rm B}}^2$.

	After serial to parallel conversion (S/P) and CPP removal, the received SE-AFDM signal at Bob in the time domain is given by 
	\begingroup
	\setlength{\abovedisplayskip}{3pt}
	\setlength{\belowdisplayskip}{3pt}
	\begin{equation}
		{\mathbf{r}}_{{\rm B}} = \sum\limits_{i = 1}^P {{{\tilde h}_i^{\rm B}}{{\bf{\Gamma }}_{{\rm cpp}{_i}}}{{\bf{\Delta }}_{{f_{i}^{\rm B}}}}{{\bf{\Pi }}^{{l_i^{\rm B}}}}{\bf{s}}_{\rm A}}  + {\bf{w}}_{{\rm B}}.
		\label{eq: relation between R and x in matrix form} 
	\end{equation}
	\endgroup
		
	Then, multiplying ${\mathbf{r}}_{{\rm B}}$ by matrix ${\bm{\Lambda} _{{c_1}}}$ and performing DFT, we obtain
	\vspace{-0.2cm}
	\begin{eqnarray}   
		\begin{aligned} \label{eq:ES_AFDM_Rx_TD}
			\begin{split}
				&{\mathbf{r}}'_{{\rm B}}\hspace{-1mm}=\hspace{-1mm} \sum\limits_{i = 1}^P {{{\tilde h}^{\rm B}_i} \mathbf{F}{\bm{\Lambda} _{{c_1}}} {{\bf{\Gamma }}_{{\rm cpp}{_i}}}{{\bf{\Delta }}_{{f^{\rm B}_{i}}}}{{\mathbf{\Pi }}^{{l^{\rm B}_i}}}{\bm{\Lambda} ^{H}_{{c_1}}} \mathbf{F}^{ H}{\bm{\Lambda} ^{H}_{{c_2,{\rm A}}}}\mathbf{x} }  + {\mathbf{w}}'_{{\rm B}},\hspace{-1mm}
			\end{split}
		\end{aligned}
	\end{eqnarray}
	where ${\bf{w}}'_{{\rm B}} = \mathbf{F}{\bm{\Lambda} _{{c_1}}} {\bf{w}}_{{\rm B}}$. 
	
	After that, ${\mathbf{r}}'_{{\rm B}}$ is multiplied by the dynamic matrix ${\bm{\Lambda} _{{c_2,{\rm B}}}} \!= \! {\rm diag}\left( {{e^{ - j2\pi {c}^{\rm B}_{2,\mu}\left[m\right]{m^2}}},m  \!= \! 0, \ldots ,N {-} 1} \right)$, where $\mathbf{c}^{\rm B}_{2,\mu}$ is an ${N \times 1}$ parameter vector corresponding to the $\mu$-th AFDM symbol for Bob.
	Each entry in $\mathbf{c}^{\rm B}_{2,\mu}$ is selected from the codebook $\mathcal{C}_2$ according to an index controlled by a ${\log _2}M$-bit random sequence.
	Leveraging the prior knowledge of the LPPN sequence generator polynomials of Alice, Bob synchronizes the local generator to produce the random sequence. The synchronization details will be elaborated later. Now, the received matrix at Bob in the affine domain can be written in matrix form as
	\begingroup
	\setlength{\abovedisplayskip}{2pt}
	\setlength{\belowdisplayskip}{2pt}
	\begin{align} \label{eq:ES_AFDM_Rx_AD}
		\mathbf{y}_{{\rm B}}&= \sum\limits_{i = 1}^P {{{\tilde h}^{\rm B}_i}{\bm{\Lambda} _{{c_2,{\rm B}}}} \mathbf{F}{\bm{\Lambda} _{{c_1}}} {{\bf{\Gamma }}_{{\rm cpp}{_i}}}{{\bf{\Delta }}_{{f^{\rm B}_{i}}}}{{\bf{\Pi }}^{{l^{\rm B}_i}}}{\bm {\Lambda} ^{H}_{{c_1}}} \mathbf{F}^{H}{\bm{\Lambda} ^{ H}_{{c_2,{\rm A}}}}\mathbf{x} }  + {\mathbf{\bar w}}_{{\rm B}} \nonumber \\
	& = \sum\limits_{i = 1}^P {{{\tilde h}^{\rm B}_i}{\bm{\Lambda} _{{c_2,{\rm B}}}} \mathbf{H}_i^0 {\bm{\Lambda} ^{ H}_{{c_2,{\rm A}}}}\mathbf{x} }  + {\bf{\bar w}}_{B} \nonumber \\
		& = \mathbf{H}_{{\rm eff},{\rm B}} \mathbf{x}  + {\bf{\bar w}}_{{\rm B}},
	\end{align} 
	\endgroup
	where ${\bf{\bar w}}_{{\rm B}} = {\bm{\Lambda} _{{c_2,{\rm B}}}}\mathbf{F}{\bm{\Lambda} _{{c_1}}} {\bf{w}}_{{\rm B}}$. 
	
	
	\vspace{-1ex}
	
	\subsection{Input-Output Relation of SE-AFDM Between Alice and Bob}
	
	\vspace{0.5ex}
	\vspace{-1ex}
	
	Based on \eqref{eq:ES_AFDM_Rx_AD}, one has 
	\begingroup
	\setlength{\abovedisplayskip}{4pt}
	\setlength{\belowdisplayskip}{4pt}
	\begin{eqnarray}
		{H}_{i}^0[p,q]=\frac1Ne^{j\frac{2\pi}N\left(Nc_1(l^{\rm B}_i)^2-ql^{\rm B}_i\right)}{\mathcal{F}}_{i, \rm B}[p,q],
	\end{eqnarray} 
	\endgroup
	%
	%
	%
where ${{\mathcal{F}}_{i,{\rm B}}}\left[ {p,q} \right] {=} \frac{{{e^{ - j2\pi \left( {p - q - {\nu^{\rm B}_i} + 2N{c_1}{l^{\rm B}_i}} \right)}} - 1}}{{{e^{ - j\frac{{2\pi }}{N}\left( {p - q - {\nu^{\rm B}_i} + 2N{c_1}{l^{\rm B}_i}} \right)}} - 1}}$ with ${\nu ^{\rm B}_i} = N{f^{\rm B}_i}$. The effective channel $\mathbf{H}_{{\rm eff},{\rm B}}$ can be rewritten as
\begingroup
\setlength{\abovedisplayskip}{2pt}
\setlength{\belowdisplayskip}{2pt}
\begin{align} \label{eq:Heff_B_pq}
	&{H}_{{\rm eff},{\rm B}} \left[p,q\right]  = \sum\limits_{i = 1}^P {{{\tilde h}^{\rm B}_i} { H}_i^0[p,q]{e^{j2\pi \left[ {{{{c}}_{2,\mu}^{\rm A}}[q]{q^2} - {{{c}}_{2,\mu}^{\rm B}}[p]{p^2}} \right]}}} \nonumber \\  
	&\!= \hspace{-1.7mm} \sum \limits_{i = 1}^P \hspace{-1.7mm} \frac1N  \hspace{-0.5mm}{{\tilde h}^{\rm B}_i}\hspace{-0.5mm} {e^{j2\pi \!\left[ {{{{c}}_{2,\mu}^{\rm A}}\![q]{q^2} \!\hspace{-0.3mm} - \hspace{-0.3mm} {{{c}}_{2,\mu}^{\rm B}}\![p]{p^2}}\! \right]}}\hspace{-1.3mm}\times\hspace{-1.3mm} {e^{j\!\frac{{2\pi }}{N}\!\left(\! {N\!{c_1}\!(l^{\rm B}_i)\!^2 \hspace{-0.3mm}\!-\hspace{-0.3mm} q{l^{\rm B}_i}}\! \right)}}\hspace{-1mm}{\mathcal{F}_{i,{\rm B}}}\![p,\!q].
\end{align}
\endgroup
	Thus, the input-output relation can be expressed as \eqref{eq:in_out_rel_Bob} on the top of this page. We can see from \eqref{eq:in_out_rel_Bob} that the vector ${\mathbf{c}}_{2,\mu}^{\rm A}$ generated at Alice  affects every received symbol at Bob.

	
	
	\begin{figure*}[htbp]
	\begin{align} \label{eq:in_out_rel_Bob}
	y_{\rm B}[p]\!\! =\! \!\sum\limits_{i = 1}^P {\tilde h_i^{\rm B}} \!\sum\limits_{q = 0}^{N - 1} \!{{H_{i,{\rm B}}}} [p,q]x[q]\!\! +\! \!w[p] \!=\! \!\sum\limits_{i = 1}^P {\tilde h_i^{\rm B}} {e^{ - j2\pi c_{2,\mu}^{\rm B}[p]{p^2}}}\sum\limits_{q = 0}^{N - 1} {\frac{1}{N}{e^{j2\pi c_{2,\mu}^{\rm A}[q]{q^2}}}} {e^{j\frac{{2\pi }}{N}\left( {N{c_1}{{\left( {l_i^{\rm B}} \right)}^2} - ql_i^{\rm B}} \right)}}\!{{\cal F}_{i,{\rm B}}}[p,q]x[q] \!\!+ \!\! w[p].
	\end{align}  
	\hrule
\end{figure*}

	Since the vectors ${\bf{c}}_{2,\mu}^{\rm A}$ and ${\bf{c}}_{2,\mu}^{\rm B}$ are controlled by the LPPN sequences of Alice and Bob, respectively, synchronizing these sequences yields ${{\bf{c}}_{2,\mu}^{\rm A}}$ and ${{\bf{c}}_{2,\mu}^{\rm B}}$, i.e., ${{{c}}_{2,\mu}^{\rm A}}[q] = {{{c}}_{2,\mu}^{\rm B}}[q],$ $q = 0, \ldots N-1$.
	Thus, the effect of  ${{\bf{c}}_{2,\mu}^{\rm A}}$ can be eliminated, and Bob can detect  $\mathbf{x}$ in the minimum mean square error (MMSE) criterion as \cite{jiang2011performance}
	\begingroup
	\setlength{\abovedisplayskip}{3.5pt}
	\setlength{\belowdisplayskip}{3.5pt}
	\begin{eqnarray} \label{Bob_MMSE}
		{\hat {\bf{x}}}_{\rm B}  = \mathbf{H}_{{\rm eff},{\rm B}}^H{\left( {{\mathbf{H}_{{\rm eff},{\rm B}}}\mathbf{H}_{{\rm eff},{\rm B}}^H + \sigma _{n,{\rm B}}^2{{\bf{I}}_N}} \right)^{ - 1}}{{\bf{y}}_{\rm B}}.
	\end{eqnarray}
	\endgroup
	
	It is shown that Bob can recover the transmitted symbols $\bf{x}$ from the received ${\bf{y}}_{{\rm B}}$ after the synchronization of the two LPPN sequence generators of Alice and Bob. The synchronization strategy between Alice and Bob will be presented in Sec. \ref{sec:syn_strategy}. Simulation and experimental results will verify this conclusion in Sec. \ref{sec:results}.

	\vspace{-1ex}
	
	\section{Security Analyses of SE-AFDM System} \label{sec:security}
	
	In this section, we analyze the security of the proposed SE-AFDM system. Specifically, the input-output relation of SE-AFDM between Alice and Eve is derived. Building upon this relation, we reveal that the effect of the vector ${\bf c}^{\rm A}_{2,\mu}$ cannot be eliminated at Eve. Moreover, the effective SINR of Eve is analyzed. 
	
		\vspace{-1ex}
	\subsection{Input-Output Relation of SE-AFDM at Eve}
	It is assumed that Eve has a very strong capability, that is, it knows the fixed waveform parameters, e.g., $c_1$, $N$, ${N_{\rm cp}}$ and the codebook $\mathcal{C}_2$. However, Eve does not know the configuration parameters of the LPPN sequence generator of Alice, which means that Eve is unable to reconstruct and synchronize ${\bf c}^{\rm A}_{2,\mu}$.

	
	The received SE-AFDM signal at Eve in the time domain can be expressed as 
	\begin{equation} 
		{\bf{r}}_{{\rm E}} = \sum\limits_{i = 1}^P {{{\tilde h}_i^{\rm E}}{{\bf{\Gamma }}_{{\rm cpp}_i}}{{\bf{\Delta }}_{{f_{i}^{\rm E}}}}{{\bf{\Pi }}^{{l_i^{\rm E}}}}{\bf{s}}_{{\rm A}}}  + {\bf{w}}_{{\rm E}} ,
		\label{eq: relation between R and x in matrix form_Eve}
	\end{equation}
	where $\mathbf{w}_{{\rm E}}\in {{\mathbb{C}}^{N\times 1}}$ is an AWGN vector with power spectral density $\sigma _{n,{\rm E}}^2$.  
	

	Similar to (\ref{eq:ES_AFDM_Rx_TD}), after performing S/P and discarding CPP, followed by multiplication with the matrix ${\bm{\Lambda} _{{c_1}}}$ and a DFT operation, we obtain
	\begin{equation} \label{eq:ES_AFDM_Eve}
		{\bf{r}}'_{{\rm E}}= \sum\limits_{i = 1}^P {{{\tilde h}_i^{\rm E}} {\bf F}{\bm{\Lambda} _{{c_1}}} {{\bf{\Gamma }}_{{\rm cpp}{_i}}}{{\bf{\Delta }}_{{f^{\rm E}_{i}}}}{{\bf{\Pi }}^{{l^{\rm E}_i}}}{{\bf \Lambda} ^{H}_{{c_1}}} {\bf F}^{ H}{{\bf \Lambda} ^{ H}_{{c_2,{\rm A}}}}{\bf x} }  + {\bf{w}}'_{{\rm E}},
	\end{equation}
	where ${\bf{w}}'_{{\rm E}} = {\bf F}{{\bf \Lambda} _{{c_1}}} {\bf{w}}_{{\rm E}}$.

	Then, ${\bf{r}}'_{{\rm E}}$ is multiplied by matrix ${\bm{\Lambda} _{{c_2,{\rm E}}}}  {=}  {\rm diag} \hspace{-1mm}\left( \hspace{-1mm} {{e^{ - j2\pi{c}^{\rm E}_{2,\mu}\left[m\right]{m^2}}}\hspace{-1mm} ,\hspace{-0.5mm} m\hspace{-1mm} =\hspace{-1mm} 0, \hspace{-0.5mm} \ldots , \hspace{-0.5mm} N {-} 1} \hspace{-1mm}\right)$, where ${\bf c}^{\rm E}_{2,\mu}$ is the parameter vector corresponding to the $\mu$-th AFDM symbol for Eve. The received matrix at Eve in the affine domain can be written in matrix form as

	\vspace{-1ex}
	
	\begingroup
	\setlength{\abovedisplayskip}{3pt}
	\setlength{\belowdisplayskip}{3pt}
	\begin{align} \label{eq:ES_AFDM_Rx_EVE}
		{\bf y}_{{\rm E}}&= \sum\limits_{i = 1}^P {{{\tilde h}^{\rm E}_i}{{\bf \Lambda} _{{c_2,{\rm E}}}} {\bf F}{\bm{\Lambda} _{{c_1}}} {{\bf{\Gamma }}_{{\rm cpp}{_i}}}{{\bf{\Delta }}_{{f^{\rm E}_{i}}}}{{\bf{\Pi }}^{{l^{\rm E}_i}}}{\bm{\Lambda} ^{ H}_{{c_1}}} {\bf F}^{ H}{\bm{\Lambda} ^{ H}_{{c_2,{\rm A}}}}{\bf x} }  + {\bf{\bar w}}_{{\rm E}} \nonumber \\
		& = {\bf H}_{{\rm eff},{\rm E}}' {\bf x}'  + {\bf{\bar w}}_{{\rm E}},
	\end{align}
	\endgroup
	where ${\bf H}_{{\rm eff},{\rm E}}' \hspace{-0.5ex}= \hspace{-0.5ex}\sum\limits_{i = 1}^P {{{\tilde h}^{\rm E}_i}{\bm{\Lambda} _{{c_2,{\rm E}}}} {\bf F}{\bm{\Lambda} _{{c_1}}} {{\bf{\Gamma }}_{{\rm cpp}{_i}}}{{\bf{\Delta }}_{{f^{\rm E}_{i}}}}{{\bf{\Pi }}^{{l^{\rm E}_i}}}\hspace{-0.5ex}{\bm{\Lambda} ^{ H}_{{c_1}}} {\bf F}^{H}} $,\hspace{-0.1ex} ${{\bf{x}}' }\hspace{-0.2ex} =\hspace{-0.2ex} {\bm{\Lambda} ^{ H}_{{c_2,{\rm A}}}}{\bf x} $, ${\bm{\Lambda} ^{H}_{{c_2,{\rm A}}}} {=} {\rm diag}\left( {{e^{ j2\pi {c}^{\rm A}_{2,\mu}\left[q\right]{q^2}}},q = 0, \ldots ,N {-} 1} \right)$, and ${\bf{\bar w}}_{{\rm E}} = {\bm{\Lambda} _{{c_2,{\rm E}}}}{\bf F}{\bm{\Lambda} _{{c_1}}} {\bf{w}}_{{\rm E}}$.

	

		\vspace{-1ex}
	\subsection{Analyzing of the Effect of ${\bf c}^{\rm A}_2$ on Eve}
	
	
	
	Eve is assumed to know the matrix ${\bf H}_{{\rm eff},{\rm E}}'$. Hence, the vector ${{\bf{x}}^\prime }$ can be estimated by Eve using MMSE as \cite{jiang2011performance}
	\begin{eqnarray}
		{\bf{\hat x}}_{\rm E}^\prime  = {\bf H}_{{\rm eff},{\rm E}}'^H{\left( {{{\bf H}_{{\rm eff},{\rm E}}'}{\bf H}_{{\rm eff},{\rm E}}'^H + \sigma _{n,{\rm E}}^2{{\bf{I}}_N}} \right)^{ - 1}}{{{\bf y}}_{\rm E}},
	\end{eqnarray}
where ${\bf{\hat x}}_{\rm E}^\prime$ is the information-symbol vector $\bf{x}$ affected by the diagonal matrix ${\bm{\Lambda} ^{ H}_{{c_2,{\rm A}}}}$ and the residual noise, denoted by ${\bf{\tilde w}}_{\rm E}$.

Neglecting the impact of residual noise, we have
	\begin{eqnarray}\label{eq:c2_x}
		\left\{\begin{array}{rl}
			e^{j2\pi {c}^{\rm A}_{2,\mu}\left[0\right]0^2} \cdot x[0]  &=\hat{{x}}_{\rm E}^{'}[0]\\[-4pt]
			&\vdots\\[-4pt]
			e^{j2\pi {c}^{\rm A}_{2,\mu}\left[N-1\right](N-1)^2} \cdot  x[N-1] &=\hat{{x}}_{\rm E}^{'}[N-1].
		\end{array} \right.
	\end{eqnarray}
	From (\ref{eq:c2_x}), it can be seen that the received $\hat{{x}}_{\rm E}^{'}[q]$ consists of both the transmitted information symbol ${x}\left[q\right]$ and ${c}^{\rm A}_{2,\mu}\left[q\right]$ for $q = 0, \ldots N-1$. If both ${\bf x}$ and ${\bf c}^{\rm A}_{2,\mu}$ are varying and unknown to Eve, Eve cannot recover ${\bf x}$ and ${\bf c}^{\rm A}_{2,\mu}$ from the received $\hat{\mathbf{x}}_{\rm E}^{'}$, since each equation has two unknowns when $q\geq1$. Meanwhile, for different AFDM symbols, ${x}\left[q\right]$ can be designed as pilots when $q=0$.
	

		\vspace{-1ex}
	\subsection{Analysis of Effective SINR of Eve}
	
	
	
	This section analyzes the security of the SE-AFDM system using the effective SINR of Eve as the metric.
	When the effective SINR of Eve decreases, less information is eavesdropped due to the reduced BER at Eve. 

	In AWGN channel, the output SINR at Bob can be expressed as
	\begingroup
	\setlength{\abovedisplayskip}{3pt}
	\setlength{\belowdisplayskip}{3pt} 
	\begin{eqnarray}\label{eq:SINR_B}
		\mathrm{SINR}_{\rm B} = \frac{{{p_s} \alpha ^2_{\rm B} }}{{\sigma _{n,{\rm B}}^2}} ={\gamma} _{\rm B},
	\end{eqnarray}
	\endgroup
	where ${p_s}$ is the transmit power of Alice, $\alpha_{\rm B}$ represents the large-scale fading from Alice to Bob, and $\gamma_{\rm B} = {p_s} \alpha ^2_{\rm B}/{{\sigma _{n,{\rm B}}^2}}$ denotes the output signal-to-noise ratio (SNR) of the received signal at Bob.
	


	At Eve, the estimation of the $q$-th symbol can be written as
	\begin{align}
		{\hat{x}}_{\rm E}^{\prime}\left[q\right]& =  {{x}}_{\rm E}^{\prime}\left[q\right] + {\tilde w}_{\rm E}\left[q\right] = {x}[q]e^{j2\pi{c}^{\rm A}_{2,\mu}[q]q^2}+{\tilde w}_{\rm E}\left[q\right] \nonumber\\
		&={x}[q]+(e^{j2\pi{c}^{\rm A}_{2,\mu}[q]q^2}-1){x}\left[q\right]+{\tilde w}_{\rm E}\left[q\right],
	\end{align} 
	where $q = 0, \ldots N-1$, ${\tilde w}_{\rm E}[q]$ denotes the residual noise after symbol detection.

	Therefore, the effective output SINR of the $q$-th symbol at Eve after signal processing is given by [\mycitenum{zou2007compensation} Eq. (16)]
	\begin{align} \label{eq:Eq25_1}
		{{\mathop{\rm SINR}\nolimits} _{{\rm E},q}}\hspace{-0.5ex} = \hspace{-0.5ex}\frac{{ \mathbb{E}\left\{ {|x[q]{|^2}} \right\}}}{\mathbb{E}{\left\{ {|x[q]{|^2}} \right\}\mathbb{E}\left\{ {{{\left| {{e^{j2\pi c_{2,\mu}^{\rm A}[q]{q^2}}} \hspace{-0.5ex}- \hspace{-0.5ex}1} \right|}^2}} \right\} \hspace{-0.5ex}+\hspace{-0.5ex} \sigma _{n,{\rm E}}^2}},
	\end{align}
	where ${c}^{\rm A}_{2,\mu}[q]$ is randomly selected at Alice from the codebook $\mathcal{C}_2$ based on a truncated LPPN sequence.

\begin{proposition}  \label{proposition1}
	After theoretical derivation, (\ref{eq:Eq25_1}) can be rewritten as
\vspace{-0.5em}
	\begin{align}\label{eq:SINR_E_local}
		&{{\mathop{\rm SINR}\nolimits} _{{\rm{E}},q}} =\nonumber\\ 
		& \left\{ {\begin{array}{*{20}{l}}
				\hspace{-2ex} {{\gamma _{\rm{E}}},}&\hspace{-1.8ex}{{\Delta _{\rm{A}}}{q^2} \in \mathbb{Z},}\\
				\hspace{-2ex} {\frac{{{\gamma _{\rm{E}}}}}{{{\gamma _{\rm{E}}}\!\left\{ {\!2 -\! \frac{2}{M}\!\Re \left\{\!\! {{e\!^{ - j2\pi {c_{2,\max }}{q^2}}}\!\frac{{{e^{j2\pi {\Delta _{\rm{A}}}{q^2}M}}\! -\! 1}}{{{e^{j2\pi {\Delta _{\rm{A}}}{q^2}}} \!- \!1}}\! } \right\} } \!\!\right\}\! +\! 1}}}\!,&\hspace{-1.8ex}{{\text{otherwise}}{{,}}}
		\end{array}} \right.
	\end{align}
	where $\gamma_{\rm E} = {p_s} \alpha ^2_{\rm E}/{{\sigma _{n,{\rm E}}^2}}$ denotes the output SNR of the received signal at Eve, $\alpha_{\rm E}$ represents the large-scale fading from Alice to Eve, and ${\Delta_{\rm A}}=\frac{{2{c_{2,\max }}}}{M-1}$ denotes the codebook interval.
\end{proposition}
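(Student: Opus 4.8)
The plan is to evaluate the only stochastic term in \eqref{eq:Eq25_1}, namely $\mathbb{E}\{|e^{j2\pi c_{2,\mu}^{\rm A}[q]q^2}-1|^2\}$, in closed form and then substitute it back. Since $c_{2,\mu}^{\rm A}[q]$ is drawn from the codebook $\mathcal{C}_2$ in \eqref{eq:Az} through the truncated LPPN sequence, I would first model this draw as uniform over the $M$ entries $\{A_0,\dots,A_{M-1}\}$, each occurring with probability $1/M$, and treat it as independent of the data symbol $x[q]$ (this independence is exactly what lets the expectation in \eqref{eq:Eq25_1} factor into the product of $\mathbb{E}\{|x[q]|^2\}$ and $\mathbb{E}\{|e^{j2\pi c_{2,\mu}^{\rm A}[q]q^2}-1|^2\}$). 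The elementary identity $|e^{j\theta}-1|^2 = 2-2\Re\{e^{j\theta}\}$ then turns the expectation into
\begin{equation}
\mathbb{E}\{|e^{j2\pi c_{2,\mu}^{\rm A}[q]q^2}-1|^2\} = 2 - \frac{2}{M}\Re\Bigl\{\sum_{k=0}^{M-1} e^{j2\pi A_k q^2}\Bigr\}.
\end{equation}

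Next I would insert $A_k = -c_{2,\max}+k\Delta_{\rm A}$ from \eqref{eq:Az} and pull the $k$-independent phase out of the sum, which leaves a finite geometric series in the ratio $e^{j2\pi\Delta_{\rm A}q^2}$:
\begin{equation}
\sum_{k=0}^{M-1} e^{j2\pi A_k q^2} = e^{-j2\pi c_{2,\max}q^2}\sum_{k=0}^{M-1}\bigl(e^{j2\pi\Delta_{\rm A}q^2}\bigr)^{k}.
\end{equation}
Whenever $\Delta_{\rm A}q^2\notin\mathbb{Z}$ the ratio differs from unity, so the standard closed form $\sum_{k=0}^{M-1}w^k = (w^M-1)/(w-1)$ applies with $w = e^{j2\pi\Delta_{\rm A}q^2}$, yielding $e^{-j2\pi c_{2,\max}q^2}(e^{j2\pi\Delta_{\rm A}q^2 M}-1)/(e^{j2\pi\Delta_{\rm A}q^2}-1)$. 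Taking the real part and substituting into the displayed expectation reproduces exactly the bracketed term in the ``otherwise'' branch of \eqref{eq:SINR_E_local}. To finish, I would substitute this expectation into \eqref{eq:Eq25_1} and divide numerator and denominator by $\sigma_{n,{\rm E}}^2$, using $\mathbb{E}\{|x[q]|^2\}=p_s\alpha_{\rm E}^2$ so that the common power factor collapses to $\gamma_{\rm E}=p_s\alpha_{\rm E}^2/\sigma_{n,{\rm E}}^2$ in both places; this gives the $\gamma_{\rm E}/(\gamma_{\rm E}\{\cdots\}+1)$ form stated in \eqref{eq:SINR_E_local}.

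The main obstacle is the degenerate case $\Delta_{\rm A}q^2\in\mathbb{Z}$, where $w=1$ and the geometric-sum formula is invalid, so it must be handled separately: the sum then collapses to $M e^{-j2\pi c_{2,\max}q^2}$, and the expectation reduces to $2-2\cos(2\pi c_{2,\max}q^2)$. The clean first branch ${\rm SINR}_{{\rm E},q}=\gamma_{\rm E}$ therefore follows once $c_{2,\max}q^2\in\mathbb{Z}$ as well (in particular at the pilot index $q=0$, where $q^2=0$ forces every phase to vanish), and I would make this additional hypothesis on the codebook explicit rather than leaving it implicit in the stated condition. A secondary point worth verifying is that the uniformity and independence of the LPPN-driven selection hold well enough across a symbol block for the ensemble average in \eqref{eq:Eq25_1} to coincide with the operational SINR seen by Eve, which is where the pseudo-random, long-period nature of the sequence is invoked.
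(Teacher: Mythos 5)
Your proposal follows the paper's Appendix proof essentially step for step: the same uniform-over-$\mathcal{C}_2$ model for $c_{2,\mu}^{\rm A}[q]$, the same reduction of $\mathbb{E}\bigl\{|e^{j2\pi c_{2,\mu}^{\rm A}[q]q^2}-1|^2\bigr\}$ via $2-2\Re\{\cdot\}$ (the paper writes it as $2-\mathbb{E}(e^{j\theta})-\mathbb{E}(e^{-j\theta})$), the same geometric series with ratio $e^{j2\pi\Delta_{\rm A}q^2}$ after extracting $e^{-j2\pi c_{2,\max}q^2}$, and the same normalization by $\sigma_{n,{\rm E}}^2$ to produce $\gamma_{\rm E}$. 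Your closing remark on the degenerate branch is in fact slightly more careful than the paper, which simply asserts $\mathbb{E}(e^{\pm j2\pi c_{2,\mu}^{\rm A}[q]q^2})=1$ when $\Delta_{\rm A}q^2\in\mathbb{Z}$; as you observe, that step additionally needs $c_{2,\max}q^2\in\mathbb{Z}$ (automatic for odd $M$ since $c_{2,\max}=(M-1)\Delta_{\rm A}/2$, but not in general for even $M$), so flagging that hypothesis explicitly is a legitimate tightening rather than a deviation in approach.
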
	
\begin{proof}  
	See Appendix \ref{proofcorollary1}.
\end{proof}  

	The average effective SINR of $N$ received symbols at Eve can be calculated as follows:     
    \begingroup
	\setlength{\abovedisplayskip}{3pt}
	\setlength{\belowdisplayskip}{3pt} 
	\begin{align}\label{eq:SINR_E}
		{{\mathop{\rm SINR}\nolimits} _{{\rm E}}} = \frac{1}{N}\!\!\sum\limits_{q = 0}^{N - 1} {{\mathop{\rm SINR}\nolimits} _{{\rm E},q}}. 
	\end{align}  
	\endgroup

	{\bf Discussion on the effective SINR of Eve:} From (\ref{eq:SINR_E_local}) and (\ref{eq:SINR_E}), it can be observed that the effective SINR of Eve in the SE-AFDM system is affected by $c_{2,{\rm max}}$ and $M$. 

	
	For finite $M$ and $P$, and as $c_{2,{\rm max}}$ tends to zero, 
	i.e., $c_{2,{\rm max}} \to 0$, ${e^{ - j2\pi {q^2}{c_{2,\max }}}} \approx 1$,
	$ {e^{j\frac{{4\pi {q^2}{c_{2,\max }}M}}{{M - 1}}}} -1 \approx \frac{{j4\pi {q^2}{c_{2,\max }}M}}{{M - 1}}$, $ {e^{j\frac{{4\pi {q^2}{c_{2,\max }}}}{{M - 1}}}}-1 \approx \frac{{j4\pi {q^2}{c_{2,\max }}}}{{M - 1}}$, and 
	${\frac{1}{M}\Re \left\{ {{e^{ - j2\pi {c_{2,\max }}{q^2}}}\frac{{{e^{j2\pi {\Delta _{\rm{A}}}{q^2}M}} - 1}}{{{e^{j2\pi {\Delta _{\rm{A}}}{q^2}}} - 1}}} \right\}}=1$, one has $\mathrm{SINR}_{{\rm E}} = \gamma _{\rm E}$ = ${p_s} \alpha ^2_{\rm E}/{{\sigma _{n,{\rm E}}^2}}$. When the transmission power ${p_s}$ at Alice increases, the SNR of Bob $\gamma_{\rm B}$ and the SNR of Eve $\gamma_{\rm E}$ rise. In this case, there is a high risk of eavesdropping, indicating a lack of security. 

	\vspace{-5pt}
	\begin{example}
		Considering $\gamma _{\rm E}$ $= 25$ dB, $N = 1024$, and $M = 10^{5}$, the effective SINR at Eve in the SE-AFDM system with different values of $c_{2,\max}$ is illustrated in Fig.~\ref{fg:cs_vs_c2max}. It is shown that the effective SINR of Eve declines as $c_{2,{\rm max}}$ increases. Consistent with the theoretical analysis,  $\mathrm{SINR}_{{\rm E}} = \gamma _{\rm E}$ $= 25$dB when $c_{2,{\rm max}}$ is small. As $c_{2,\rm max}$ continues to increase, the effective SINR eventually approaches -0.93 dB. In summary, only a sufficiently large $c_{2,{\rm max}}$ can provide security in the SE-AFDM system.
	\end{example}

	

	
	\begin{figure}[htbp]
		\vspace{-5pt}
		\centering
		\includegraphics[width=2.5in]{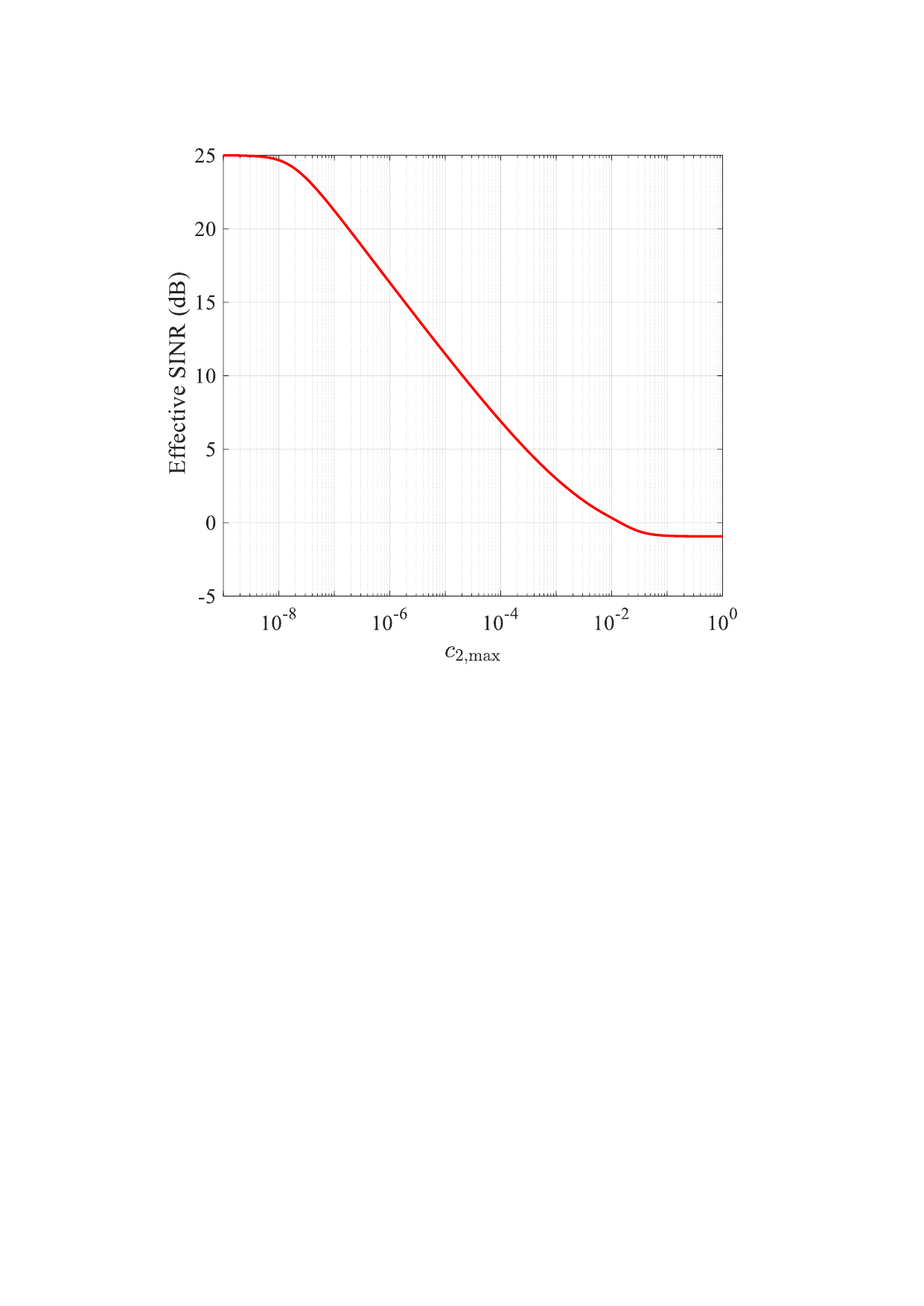}  
		\caption{The effective SINR at Eve versus $c_{2,\rm max}$ of SE-AFDM with $\gamma _{\rm E}$ = 25 dB.  
			\label{fg:cs_vs_c2max}}
		\vspace{-5pt}
	\end{figure}
	


	

	
	
	We now briefly analyze the complexity of brute-force search at Eve and the spectrum efficiency of the proposed SE-AFDM system.
	If Eve attempts to recover ${\bf c}^{\rm A}_{2,\mu}$ of the $\mu$-th AFDM symbol by brute force, the search space size is $M^N$ for the proposed SE-AFDM system with parameter-domain spreading. For comparison, the search space size for a data-domain DSSS system using an $N$-length LPPN sequence is $2^N$. Let ${\Delta_{\rm E}}$ denote the fixed search interval adopted by Eve.
    Since the codebook $\mathcal{C}_2$ is publicly available, Eve may adopt a fixed search interval ${\Delta _{\rm{E}}}$ larger than the codebook interval ${\Delta_{\rm A}}$ to reduce the search space size. Assume that the codebook size satisfies $M \ge 2$ and the search interval is ${\Delta _{\rm{E}}} = u{\Delta_{\rm A}}$, where $u = 1, \ldots ,M-1$. Then  the search values are $A_{{\rm E},k} = - {c_{2,\max }} + k{\Delta _{\rm E}}$, $k=0,\ldots,{\left\lfloor {\frac{{M - 1}}{u}} \right\rfloor}$, and the new search space size is ${\left( {\left\lfloor {\frac{{M - 1}}{u}} \right\rfloor \! \!+ \!\!1} \right)\!^N}$. Denote the set of search values as ${{{\bf{A}}_{\rm{E}}}}$.
    Each ${ c}^{\rm{E}}_{2,\mu}[q]$ of ${ \bf c}^{\rm{E}}_{2,\mu}$ is the closest to ${c}^{\rm{A}}_{2,\mu}[q]$ among the set of search values ${{{\bf{A}}_{\rm{E}}}}$, i.e., 
    \begingroup
    \setlength{\abovedisplayskip}{3pt}
    \setlength{\belowdisplayskip}{3pt}
     \begin{eqnarray}  \label{choose_c2E}
    c_{2,\mu}^{\rm{E}}[q] = \arg \mathop {\min }\limits_{{A_{{\rm{E}},k}} \in {{\bf{A}}_{\rm{E}}}} \left| {{A_{{\rm{E}},k}} - c_{2,\mu}^{\rm{A}}[q]} \right|. 
     \end{eqnarray} 
     \endgroup
    At Eve, the estimation error can be modeled as $\delta_{c_2}\! =\! \left| {c_{2,\mu}^{\rm{E}}[q]\! -\! c_{2,\mu}^{\rm{A}}[q]} \right|$, where ${\delta _{{c_2}}} \!\!\in\!\! \{\xi{\Delta_{\rm A}}  \! \mid\!\xi \!\in\! \left\{ {0, \ldots ,{\xi _{\max }}} \right\}\}$ and ${\xi _{\max }} = \max \left\{ {M - 1 - u\left\lfloor {\frac{{M - 1}}{u}} \right\rfloor ,\left\lfloor {\frac{u}{2}} \right\rfloor } \right\}$.	In terms of spectrum efficiency, the proposed SE-AFDM system achieves the same spectrum efficiency as the existing AFDM system.

	
	
	


	\section{Synchronization Framework of SE-AFDM} \label{sec:syn_strategy}
	
	
	
	
	
	In this section, we present a synchronization framework to achieve the synchronization of the dynamic parameter $c_2$ at Bob. To enable synchronization in fast time-varying channels, we design an affine-domain frame structure for the SE-AFDM system. Based on this, a corresponding synchronization strategy is proposed, where Bob achieves $c_2$ synchronization by synchronizing the LPPN sequence.
	
	

	
	
		\vspace{-2ex}

	\subsection{Proposed Frame Structure for SE-AFDM System}


   
   As shown in Fig. \ref{fg:frame_structure}, the proposed frame structure consists of $K$ AFDM symbols, organized into three blocks: the frame synchronization block, the LPPN sequence synchronization block and the secure data transmission block. Accordingly, the $K$ AFDM symbols can be arranged in matrix form as 
    \begingroup
   \setlength{\abovedisplayskip}{2pt}
   \setlength{\belowdisplayskip}{3.5pt}
   \begin{eqnarray}\label{frame_S}
   	{\bf{S}} = \left[ {{{\bf{S}}_{{\rm{head}}}},{{\bf{S}}_{{\rm{LPPN}}}},{{\bf{S}}_{{\rm{com}}}}} \right],
   \end{eqnarray} 
   \endgroup
    where ${\bf{S}}\in {\mathbb{C}^{ N \times K}}$, $N=2Q+L+1$, ${{{\bf{S}}_{{\rm{head}}}} \in {\mathbb{C}^{N \times J}}}$ represents the frame synchronization block using fixed  ${\mathbf{c}}^{\rm A}_{2,\mu}$ vectors with $\mu\!=\!0,\ldots,J-1$ for frame synchronization, ${{{\bf{S}}_{{\rm{LPPN}}}} \in {\mathbb{C}^{N \times {\left( {E - J} \right)}}}}$ denotes the LPPN sequence synchronization block using fixed  ${\mathbf{c}}^{\rm A}_{2,\mu}$ vectors with $\mu=J,\ldots,E-1$ to enable LPPN sequence synchronization between Alice and Bob, and ${{{\bf{S}}_{{\rm{com}}}} \in {\mathbb{C}^{N \times {\left( {K-E} \right)}}}}$ corresponds to the secure data transmission block using dynamic ${\mathbf{c}}^{\rm A}_{2,\mu}$ vectors with $\mu=E,\ldots,K-1$ for secure transmission of random communication data.
    
   

   \begin{figure}[htbp]  
   	\vspace{-5pt}
   	\centering
   	\includegraphics[width=3.5in]{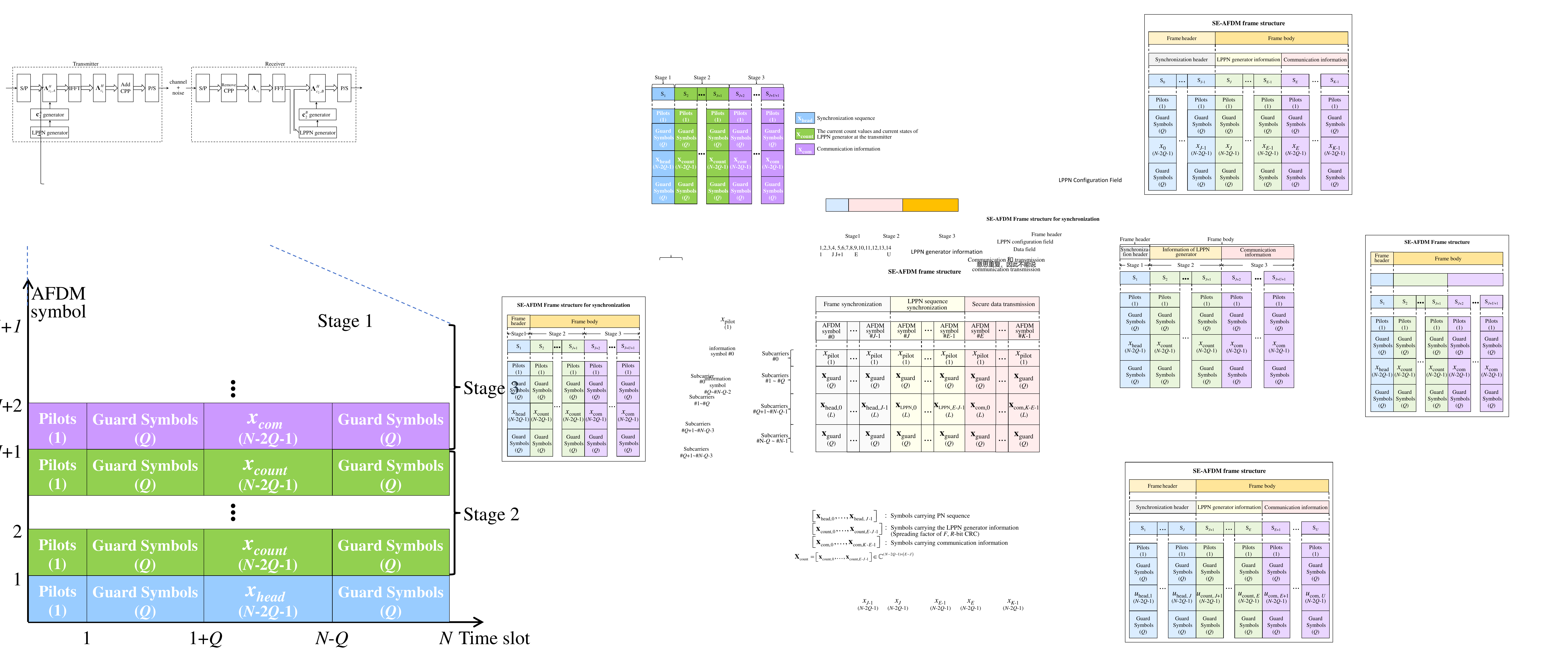}
   	\caption{ Affine-domain frame structure designed for the synchronization of the SE-AFDM system.
   		\label{fg:frame_structure}     }
   	\vspace{-5pt}
   \end{figure}

   \subsubsection{Frame synchronization block} 
   
   
\indent    For the frame synchronization block, ${{\bf{S}}_{{\rm{head}}}}$ is given by
%
   \begingroup
\setlength{\abovedisplayskip}{3pt}
\setlength{\belowdisplayskip}{3.5pt}
    \begin{eqnarray}  
	{{\bf{S}}_{{\rm{head}}}} = \left[ {{{\bf{s}}_{{\rm{head,}}0}}, \ldots ,{{\bf{s}}_{{\rm{head, }}J - 1}}} \right],
	\end{eqnarray}
	\endgroup
	where ${{{\bf{s}}_{{\rm{head,}}i}}} \in {\mathbb{C}^{N \times 1}}$  denotes an AFDM symbol in the affine domain for $i = 0, \ldots, J-1$, which is defined as 
   \begingroup
\setlength{\abovedisplayskip}{3pt}
\setlength{\belowdisplayskip}{3.5pt}
    \begin{eqnarray}
	{{\bf{s}}_{{\rm{head,}}i}} = {\left[ {{{ {{x^T_{{\rm{pilot }}}}} }},{{ {{{\bf{x}}^T_{{\rm{guard }}}}} }},{{{{{\bf{x}}^T_{{\rm{head,}}i}}} }},{{ {{{\bf{x}}^T_{{\rm{guard }}}}} }}} \right]^T},
    \end{eqnarray}  
    \endgroup 
   where ${x_{{\rm{pilot }}}}$ represents a pilot symbol for channel estimation, ${{\bf{x}}_{{\rm{guard }}}} = \mathbf{0}_{Q \times 1}$ denotes ${Q}$ guard intervals, and ${{{\bf{x}}_{{\rm{head,}}i}}}\in \mathbb{C}^{L \times 1}$  contains $R$-QAM symbols for $i = 0, \ldots, J-1$.  
   
   For ${{{\bf{x}}_{{\rm{head,}}i}}}$, we first generate a pseudo-noise (PN) sequence $\mathbf{m} \in \{0,1\}^{JL {\log _2}R \times 1}$ using an m-sequence generator \cite{tian2009m}. 
   Then, the sequence $\mathbf{m}$ is divided into $J$ contiguous segments to form a matrix ${{\bf{M}}} = \left[ {{\bf{m}}_{0}}, \ldots ,{{\bf{m}}_{J-1}} \right] \in {\left\{ {0,1} \right\}^{L{{\log }_2}R \times J}}$.  By applying $R$-QAM modulation to the vectors ${{\bf{m}}_i}, i=0,\ldots,J-1$, we obtain the corresponding complex symbol vectors ${{{\bf{x}}_{{\rm{head,}}i}}} \in \mathbb{C}^{L \times 1}$ for $i=0, \ldots, J-1$. Specifically, the $k$-th element of ${{{\bf{x}}_{{\rm{head,}}i}}}$ is given by
   \begin{eqnarray}
   x_{{{\rm{head,}}i}}[k]\!\!=\!\!\mathcal{M}\!_{R\text{-QAM}}\!\left(\!m_{i}[k{\log _2}R :(\!k\!\!+\!\!1\!){\log _2}R\!\!-\!\!1] \right)\!, 
   \end{eqnarray} 
   where $ k\! =0, \ldots,\! L-1$. For $i\!=\!0,\ldots,J-1$, we construct ${{{\bf{x}}_{{\rm{head,}}i}}}$ into a matrix $\mathbf{X}_{\mathrm{head}}\!=\!\big[\mathbf{x}_{\mathrm{head},0},\ldots,\mathbf{x}_{\mathrm{head},J-1}\big] \in {\mathbb{C}^{L \times J}}$.




   From (\ref{eq:c2_x}), we can see that the received signal is affected by ${\mathbf{c}}^{\rm A}_{2,\mu}$ and the information symbols. To eliminate the impact of ${\mathbf{c}}^{\rm A}_{2,\mu}$ on the frame header, ${\mathbf{c}}^{\rm A}_{2,\mu}$ is fixed to be a known constant vector $\mathbf{u}_{N \times 1}$ shared by Alice and Bob. The modulated time-domain signal of the AFDM symbol ${{{\bf{s}}_{{\rm{head,}}i}}}$ is given by 
   \begin{equation}
   	\hspace*{-0.5em}
   	{s}_{{\rm A	1}}\left[ n \right]\!=\!\frac{1}{\sqrt{N}}\sum\limits_{m=0}^{N-1}{{{{s}}_{{\rm{head,}}i}} }\left[ m \right]{{e}^{j2\pi \left( {{c}_{1}}{{n}^{2}}+{c}^{\rm A}_{2,\mu}\left[m\right]{{m}^{2}}+\frac{mn}{N} \right)}} ,
   \end{equation}
   where $n \!= \!- {N_{\rm cp}}, \ldots , N-1$, $i = 0, \ldots, J-1$ and $c_{2,\mu}^{\rm{A}}[m] \!= \!u$.







   
    \subsubsection{LPPN sequence synchronization block} 
    
  \indent  The LPPN sequence synchronization block ${{\bf{s}}_{{\rm{LPPN}}}}$, composed of $E-J$ AFDM symbols in the affine domain, is given by
     \begingroup
  \setlength{\abovedisplayskip}{3pt}
  \setlength{\belowdisplayskip}{3.5pt}
   \begin{eqnarray}  
   	{{\bf{S}}_{{\rm{LPPN}}}} = \left[ {{{\bf{s}}_{{\rm{LPPN,}}0}}, \ldots ,{{\bf{s}}_{{\rm{LPPN, }}E-J - 1}}} \right],
   \end{eqnarray}
   \endgroup
   where ${{{\bf{s}}_{{\rm{LPPN,}}i}}} \in {\mathbb{C}^{N \times 1}}$, $i = 0, \ldots, E-J-1$. The ${{{\bf{s}}_{{\rm{LPPN,}}i}}}$ is formed by  
    \begin{eqnarray}
   	{{\bf{s}}_{{\rm{LPPN,}}i}} = {\left[ {{{ {{x^T_{{\rm{pilot }}}}} }},{{ {{{\bf{x}}^T_{{\rm{guard }}}}} }},{{ {{{\bf{x}}^T_{{\rm{LPPN,}}i}}} }},{{ {{{\bf{x}}^T_{{\rm{guard }}}}} }}} \right]^T},
   \end{eqnarray}
   where ${{{\bf{x}}_{{\rm{LPPN,}}i}}}\in \mathbb{C}^{L \times 1}$ and $i = 0, \ldots, E-J-1$.

   
   To achieve the LPPN sequence synchronization between Alice and Bob, the state parameters of the LPPN sequence generator at Alice are transmitted in the LPPN sequence synchronization block. These include the counter values ${n_{{\rm{X1A}}}}$, ${n_{{\rm{X1B}}}}$, ${n_{{\rm{X2A}}}}$, ${n_{{\rm{X2B}}}}$, ${n_{{\rm{X1}}}}$, ${n_{{\rm{X2}}}}$ and the current register states ${\bf{s}}^k_{\rm{X1A}}$, ${\bf{s}}^k_{\rm{X1B}}$, ${\bf{s}}^k_{\rm{X2A}}$, ${\bf{s}}^k_{\rm{X2B}}$. First, the decimal counter values are converted to binary and concatenated to construct a vector ${\bf{n}} $. Then, we have  
   \begin{eqnarray}
   	{\bf{w}} = \left[ {{{\bf{n}}}^T,\left({\bf{s}}^k_{\rm{X1A}}\right)\!^{T}\!,\left({\bf{s}}^k_{\rm{X1B}}\right)\!^{T}\!,\left({\bf{s}}^k_{\rm{X2A}}\right)\!^{T}\!,\left({\bf{s}}^k_{\rm{X2B}}\right)\!^{T}} \right]^T,
   \end{eqnarray} 
   where ${\bf{w}} \!\!\in \!\!{\left\{\! {0,1} \!\right\}\!\!^{{D}\! \times \!1}}$, ${\bf{n}} \!\!\in \!\!{\left\{\! {0,1} \!\right\}}\!^{(D-4S) \!\times\! 1}$, ${\bf{s}}^k_{\beta} \!\!\in \!\!{\left\{\! {0,1} \!\right\}}\!^{S \!\times\! 1}$, 
  $\beta\!\! \in\!\! \{ {\!\text{X1A,X1B,X2A,X2B}\!}\}$.
   Given the importance of ${\bf{w}}$, DSSS with a spreading factor $F$ is employed to transmit ${\bf{w}}$ under low-SNR conditions. Let ${\bf{m}}_{F}$ denote the $F \times 1$ spreading sequence, which can be generated as described in \cite{tian2009m}. After converting $\bf{w}$ and ${\bf{m}}_{F}$ from unipolar to bipolar by multiplying each bit by 2 and subtracting 1, we obtain ${{\bf{w}}^b} \!\!\in \!\!{\left\{\! {-1,1} \!\right\}\!\!^{{D}\! \times \!1}}$ and ${\bf{m}}^b_{F}\!\! \in\!\! {\left\{ {-1,1} \right\}\!^{F \!\times\! 1}}$. The DSSS output is then obtained as follows \cite{alagil2019randomized}: 
      \begingroup
   \setlength{\abovedisplayskip}{3pt}
   \setlength{\belowdisplayskip}{3.5pt}
\begin{eqnarray}
	{\bf{x}}^b = {{\bf{w}}^b} \otimes {\bf{m}}_F^b,
\end{eqnarray}
\endgroup
 where ${\bf{x}}^b\in{\left\{ {-1,1} \right\}^{{D}F \times 1}}$. 
 By applying bipolar-to-unipolar conversion and zero-padding to ${\bf{x}}^b $, the vector $\mathbf{x} \in {\left\{\! {0,1} \!\right\}\!\!^{(E-J)L{\log _2}R}}$ is obtained.
 The vector $\mathbf{x}$ is then reorganized into a matrix ${{{\bf X}}}$, defined as ${{{\bf X}}} = \left[ {{{\bf x}}_{0}}, \ldots ,{{{ \bf x}}_{E-J-1}} \right]$. By applying $R$-QAM modulation to ${{{\bf x}}_{i}} \in {\left\{\! {0,1} \!\right\}\!\!^{L{\log _2}R}}$, $i=0,\ldots,E-J-1$, we obtain the symbol vectors ${{{\bf{x}}_{{\rm{LPPN,}}i}}} \in \mathbb{C}^{L \times 1}$. The $k$-th element of ${{{\bf{x}}_{{\rm{LPPN,}}i}}}$ is given by
 \begin{eqnarray}
 	x_{{{\rm{LPPN,}}i}}[k]\!\!=\!\!\mathcal{M}\!_{R\text{-QAM}}\!\left(\!x_{i}[k{\log _2}R :(\!k\!\!+\!\!1\!){\log _2}R\!\!-\!\!1] \right)\!, 
 \end{eqnarray} 
 where $ k\! =0, \ldots,\! L-1$. The collection of all ${{{\bf{x}}_{{\rm{LPPN,}}i}}}$  yields a matrix ${{\bf{X}}_{{\rm{LPPN}}}} = \left[ {{{\bf{x}}_{{\rm{LPPN,}}0}}, \ldots ,{{\bf{x}}_{{\rm{LPPN, }}E-J - 1}}} \right]  \in {\mathbb{C}^{L \times (E-J)}} .$

 Here, ${\mathbf{c}}^{\rm A}_{2,\mu}$ for $\mu=J,\ldots,E-1$ is predefined as $\mathbf{u}_{N \times 1}$, which is pre-shared and known a priori by Alice and Bob. For ${{{\bf{s}}_{{\rm{LPPN,}}i}}}$,  the resulting time-domain signal is
 \begingroup
 \setlength{\abovedisplayskip}{3.5pt}
 \setlength{\belowdisplayskip}{3pt}
 \begin{equation}
 	\hspace*{-0.7em}
 	{s}_{{\rm A2}}\left[ n \right] \!=\!\!\frac{1}{\sqrt{N}}\!\sum\limits_{m=0}^{N-1}{{{{s}}_{{\rm{LPPN,}}i}} }\left[ m \right]{{e}^{j2\pi \left( {{c}_{1}}{{n}^{2}}+{c}^{\rm A}_{2,\mu}\left[m\right]{{m}^{2}}+\frac{mn}{N} \right)}} ,
 \end{equation}
 \endgroup
 where $n \!= \!- {N_{\rm cp}}, \ldots , N-1$, $i \!=\!0, \ldots, E\!-\!J\!-\!1$, and $c_{2,\mu}^{\rm{A}}[m] \!= \!u$.

 
%
%
%
%
%
%

    \subsubsection{Secure data transmission block} 
   
     \indent  To transmit random communication data, we employ the secure data transmission block across $K-E$ AFDM symbols in the affine domain, which is constructed as
     \begingroup
     \setlength{\abovedisplayskip}{3pt}
     \setlength{\belowdisplayskip}{3.5pt}
    \begin{eqnarray}  
    	{{\bf{S}}_{{\rm{com}}}} = \left[ {{{\bf{s}}_{{\rm{com,}}0}}, \ldots ,{{\bf{s}}_{{\rm{com, }}K-E - 1}}} \right],
    \end{eqnarray}
    \endgroup
     where ${{{\bf{s}}_{{\rm{com,}}i}}} \in {\mathbb{C}^{N \times 1}}$, $i = 0, \ldots, K-E-1$. The ${{{\bf{s}}_{{\rm{com,}}i}}}$ is given by
     \begingroup
     \setlength{\abovedisplayskip}{3pt}
     \setlength{\belowdisplayskip}{3.5pt}
     \begin{eqnarray}
      	{{\bf{s}}_{{\rm{com,}}i}} = {\left[ {{{ {{x^T_{{\rm{pilot }}}}} }},{{ {{{\bf{x}}^T_{{\rm{guard }}}}} }},{{{{{\bf{x}}^T_{{\rm{com,}}i}}} }},{{ {{{\bf{x}}^T_{{\rm{guard }}}}} }}} \right]^T},
     \end{eqnarray}
      \endgroup
     where ${{{\bf{x}}_{{\rm{com,}}i}}}\!\in \!\mathbb{C}^{L \times 1}$ denotes the symbol vector obtained from $L \log_2 R$ random information bits through $R$-QAM modulation.

     


     
     To enhance the security of the SE-AFDM system, ${\mathbf{c}}^{\rm A}_{2,\mu}$ is set to be dynamic for $\mu=E,\ldots,K-1$. Each  ${c}^{\rm A}_{2,\mu}\left[m\right]$ is generated from the codebook $\mathcal{C}_2$ using the $\log _{2} M$-bit truncated LPPN sequence, as introduced in Section~\ref{sec:section3}. The resulting time-domain signal is    
     \begingroup
     \setlength{\abovedisplayskip}{2pt}
     \setlength{\belowdisplayskip}{3pt}
     \begin{equation}
     	\hspace*{-0.5em}
     	{s}_{{\rm A3}}\left[ n \right]\!=\!\frac{1}{\sqrt{N}}\sum\limits_{m=0}^{N-1}{{{{{s}}_{{\rm{com,}}i}}}}\left[ m \right]{{e}^{j2\pi \left( {{c}_{1}}{{n}^{2}}+{c}^{\rm A}_{2,\mu}\left[m\right]{{m}^{2}}+\frac{mn}{N} \right)}} ,
     \end{equation} 
      \endgroup
     where $n = - {N_{\rm cp}}, \ldots , N-1$ and $i = 0, \ldots, K-E-1$.

	\vspace{-2ex}
	
	\subsection{Synchronization Strategy} 
	
    Based on the frame structure, the synchronization strategy can be organized into the following three stages. The channel estimation method follows \cite{bemani2023affine}, while the equalization method is based on \eqref{Bob_MMSE}.
    
	
	\vspace{0.5ex}
	\noindent\emph{Stage 1: Frame Detection} 
	\vspace{0.5ex}
	
    The first stage is to detect the frame start position using a sliding window mechanism \cite{hou2022multisignal}. Once the sliding window aligns with the frame synchronization block, a distinct correlation peak emerges, enabling reliable frame header detection. As illustrated in Fig. \ref{fg:Sliding_time_window}, a signal segment of length $J \times \left( {{N_{\rm cp}} + N} \right)$ is extracted as the window slides by ${N_{\rm cp}}$ symbols. After channel estimation and equalization with $\mathbf{c}^{\rm B}_{2,\mu} = \mathbf{u}_{N \times 1}$ for $\mu\!=\!0,\ldots,J-1$, Bob obtains the received complex matrix ${{\bf{X}}'_{{\rm{head}}}} = \left[ {{{\bf{x}}'_{{\rm{head,}}0}}, \ldots ,{{\bf{x}}'_{{\rm{head, }}J - 1}}} \right]$. Subsequent $R$-QAM demodulation and symbol detection yield the bit matrix ${{\bf{M}}'} = \left[ {{\bf{m}}'_{0}}, \ldots ,{{\bf{m}}'_{J-1}} \right]$. The received PN sequence $\mathbf{m}'$ is given by
   \begin{equation}
{\mathbf{m}^{\prime}=\operatorname{vec}\left(\mathbf{M}^{\prime}\right)},
\end{equation}
    where $\mathbf{m}^{\prime}\in {\left\{ {0,1} \right\}^{JL{{\log }_2}R \times 1}}$. Then, Bob correlates the local PN sequence ${\bf{m}}$ with the received $\mathbf{m}'$ and detects the position of the frame synchronization block based on a correlation peak. The correlation peak is obtained by
    \begingroup
    \setlength{\abovedisplayskip}{1pt}
    \setlength{\belowdisplayskip}{1pt}
    \begin{equation} \label{eq:correlation}
    	{{{r}}_{\rm{m}}} = {\bf{m}}' \cdot {{\bf{m}}^T}.
    \end{equation}  
    \endgroup
    Once the correlation peak surpasses a predefined threshold, the start position of the frame is detected, enabling the subsequent procedure. The threshold is set based on the SNR of the received signal and the PN sequence length.


    

    \begin{figure}[htbp]  
    	\vspace{-5pt}
    	\centering
    	\includegraphics[width=3.5in]{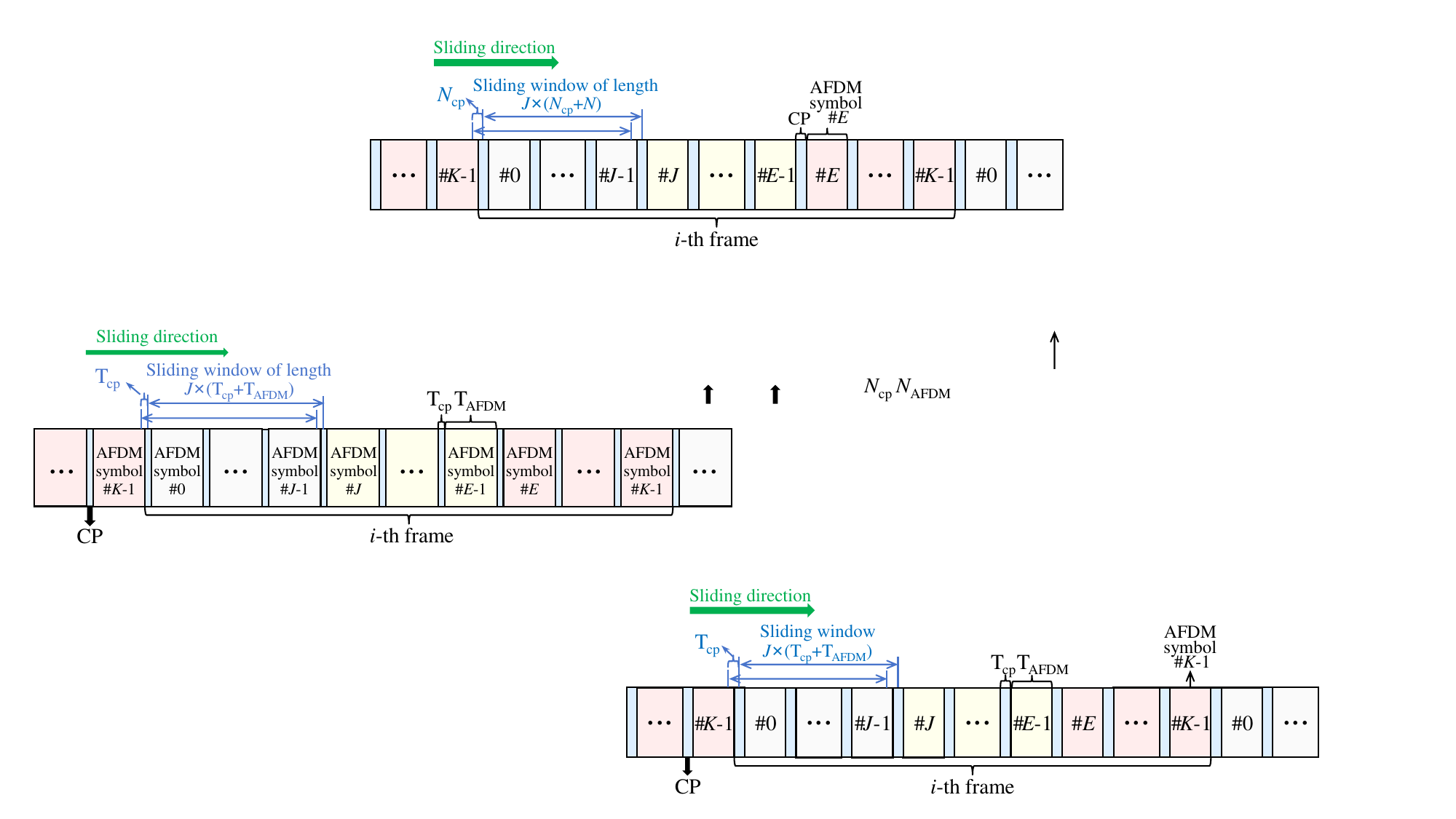}
    	\caption{The sliding window mechanism for the frame header detection.
    		\label{fg:Sliding_time_window}}
    	\vspace{-5pt}
    \end{figure}

	\vspace{0.5ex}
	\noindent\emph{Stage 2: LPPN sequence  synchronization}
	\vspace{0.5ex}

	 Stage $2$ synchronizes the LPPN sequences of Alice and Bob, thereby enabling the subsequent secure data transmission. By applying channel estimation and MMSE equalization using $\mathbf{c}^B_{2,\mu} = \mathbf{u}_{N \times 1}$ for $\mu\!=\!J,\ldots,E-1$, Bob obtains a complex matrix ${\bf{X}}'_{{\rm{LPPN}}} \in {\mathbb{C}^{L \times (E-J)}}$.
	 After vectorizing ${\bf{X}}'_{{\rm{LPPN}}}$, performing $R$-QAM demodulation, and removing the zero padding, the information vector ${\left( {{{\bf{x}}^b}} \right)'} \in \mathbb{R}^{  DF \times 1}$ is obtained. Then, Bob reshapes ${\left( {{{\bf{x}}^b}} \right)'}$ into a matrix
	 $\mathbf{S} \in \mathbb{R}^{  D \times F}$. With a local spreading sequence identical to that of Alice, despreading is performed by
	  \begingroup
	 \setlength{\abovedisplayskip}{3pt}
	 \setlength{\belowdisplayskip}{3pt}
	  \begin{equation} \label{eq:despread}
	 {\left( {{{\bf{w}}^b}} \right)'} = \frac{1}{F}  \mathbf{S}  {\bf{m}}_F^b,
	 \end{equation}
	   \endgroup
	 where ${\left( {{{\bf{w}}^b}} \right)'}\!\!\in \!\!{ \mathbb{R}^{{D}\! \times \!1}}$. After bit detection, ${\bf{w}}'$ is obtained.
	Subsequently, Bob initializes the local LPPN sequence generator with ${\bf{w}}'$ to synchronize with the LPPN sequence of Alice. Thus, $c_2$ synchronization can be achieved at Bob by the synchronized LPPN sequence. Notably, even if Eve obtains ${\bf{w}}'$, synchronization between the LPPN sequences of Eve and Alice remains unachievable due to the unknown configuration parameters of the LPPN sequence generator at Alice, thereby resulting in the failure of $c_2$ synchronization.

	\vspace{0.5ex}
	\noindent\emph{Stage 3: Secure data reception}  
	\vspace{0.5ex}
	
	Once the LPPN sequences between Alice and Bob have been synchronized in Stage 2, Bob proceeds to detect the data within the secure data transmission block.	
	After performing S/P, discarding CPP, multiplying by ${\bm{\Lambda} _{{c_1}}}$, and performing DFT, the resulting signal can be written as   
			  \begingroup
		\setlength{\abovedisplayskip}{3pt}
		\setlength{\belowdisplayskip}{3pt}
				\begin{align} 
							{\mathbf{r}}'_{\rm B}\hspace{-1mm}&=\hspace{-1mm}\! \sum\limits_{i = 1}^P \!{{{\tilde h}^{\rm B}_i} \mathbf{F}\!{\bm{\Lambda} _{{c_1}}}\! {{\bf{\Gamma }}_{{\rm cpp}{_i}}}{{\bf{\Delta }}_{{f^{\rm B}_{i}}}}{{\mathbf{\Pi }}^{{l^{\rm B}_i}}}\!\!{\bm{\Lambda} ^{H}_{{c_1}}} \mathbf{F}^{ H} \!\mathbf{x}' } \!\! +\!\! {\mathbf{w}}'_{B}, \nonumber\\ 
							& = {\mathbf{ H}}_{{\rm eff},{\rm B}}' \mathbf{x}'	+ {\mathbf{w}}'_{\rm B},	
					\end{align}
					\endgroup
		where ${\bf H}_{{\rm eff},{\rm B}}' \hspace{-0.8ex}= \hspace{-0.8ex}\sum\limits_{i = 1}^P {{{\tilde h}^{\rm B}_i}  {\bf F}{\bm{\Lambda} _{{c_1}}}\! {{\bf{\Gamma }}_{{\rm cpp}{_i}}}{{\bf{\Delta }}_{{f^{\rm B}_{i}}}}{{\bf{\Pi }}^{{l^{\rm B}_i}}}\hspace{-0.5ex}{\bm{\Lambda} ^{ H}_{{c_1}}} {\bf F}^{H}}$, ${\mathbf{x}}'\hspace{-0.7ex} = \hspace{-0.7ex}{\bm{\Lambda} ^{ H}_{{c_2,\rm A}}}{{\bf x}_{{\rm com},k}} $, and $k = 0, \ldots, K-E-1$.
		
	%
	
%

	Then, Bob estimates the channel parameters, including the number of propagation paths $P$, the channel coefficient ${{\tilde h}^{\rm B}_i}$, the time delay ${{l}^{\rm B}_i}$, and the Doppler shift ${f^{\rm B}_i}$, for $i = 1, \ldots, P$. Accordingly, the efficient channel matrix is given by
	\begingroup
	\setlength{\abovedisplayskip}{3pt}
	\setlength{\belowdisplayskip}{3pt}
	   \begin{align} 
		{{ H}}'_{{\rm eff},{\rm B}} \left[p,q\right] = \hspace{-1mm} \sum \limits_{i = 1}^P \hspace{-1mm} \frac1N  \hspace{-0.5mm}{{\tilde h}^{\rm B}_i}\hspace{-0.5mm} \hspace{-1mm}\times\hspace{-1mm} {e^{j\frac{{2\pi }}{N}\left( {N{c_1}(l^{\rm B}_i)^2 \hspace{-0.3mm}-\hspace{-0.3mm} q{l^{\rm B}_i}} \right)}}\hspace{-1mm}{\mathcal{F}_{i,{\rm B}}}[p,q].
	\end{align}
	\endgroup
	After equalization, Bob obtains ${\hat {\bf{x}}}'_{\rm B} = {\bm{\Lambda} ^{ H}_{{c_2,{\rm A} }}}{{ {\bf{x}}_{{\rm com},k }}}+ {\tilde {\bf {w}}}_{\rm B}.$
	For $\mu\!=\!E,\ldots,K-1$, Bob eliminates the impact of ${\mathbf{c}}_{2,\mu}^{\rm A}$ by constructing ${\mathbf{c}}^{\rm B}_{2,\mu}$ synchronized with ${\mathbf{c}}^{\rm A}_{2,\mu}$, i.e., ${\bm{\Lambda}_{{c_2,{\rm B}}}} = {\bm{\Lambda}_{{c_2,{\rm A}}}}$. The detailed procedure for constructing ${\mathbf{c}}^{\rm B}_{2,\mu}$ is provided in Section~\ref{sec:section3}. By multiplying ${\bm{\Lambda} _{{c_2,{\rm B}}}}$, Bob gets 
	\begingroup
	\setlength{\abovedisplayskip}{3pt}
	\setlength{\belowdisplayskip}{3pt}
	\begin{eqnarray}\label{x_final}
	{\hat {\bf{x}}}_{\rm B} = {{ {\bf{x}}_{{\rm com},k }}}+ {\bm{\Lambda}}_{{c_2,{\rm B}}}{\tilde {\bf {w}}}_{\rm B}.
	\end{eqnarray} 
	\endgroup
	From \eqref{x_final}, it can be seen that received  ${\hat {\bf{x}}}_{\rm B}$ is no longer affected by the difference between $\mathbf{c}^{\rm B}_{2,\mu}$ and $\mathbf{c}^{\rm A}_{2,\mu}$, allowing recovery of the transmitted data.

	\section{Simulation and Experimental Results} \label{sec:results}
	In this section, both simulation results and experimental results are presented to validate the performance of the proposed SE-AFDM system. Inspired by the P-code used in the GPS system \cite{IS-GPS}, the configuration parameters of the LPPN sequence generator are listed as follows. For the four shift registers, the stage number is $S=12$, the polynomial coefficients of the four shift registers are ${{\bf{e}}_{{\rm{X1A}}}}\!\!=\!\! [0,\!0,\!0,\!0,\!0,\!1,\!0,\!1,\!0,\!0,\!1,\!1]^T$,
	${{\bf{e}}_{{\rm{X1B}}}}\!\!=\!\! [ 1,\! 1,\! 0,\! 0, \!1,\! 0, \!0,\! 1,\! 1,\! 1,\! 1,\! 1]^T$, ${{\bf{e}}_{{\rm{X2A}}}}\!\!=\!\! [1,\!0,\!1,\!1,\!1,\!0,\!1,\!1,\!1,\!1,\!1,\!1]^T$, ${{\bf{e}}_{{\rm{X2B}}}}\!\!=\!\! [0, \!1, \!1, \!1, \!0,\! 0,\!0,\! 1,\! 1, \!0,\! 0,\! 1]^T$, and the corresponding initial states are ${{\bf{s}}^0_{{\rm{X1A}}}}\!\! =\!\! [0,\!0,\!0,\!1,\!0,\!0,\!1,\!0,\!0,\!1,\!0,\!0]^T$, ${{\bf{s}}^0_{{\rm{X1B}}}}\!\! =\!\! [0,\!0,\!1,\!0,\!1,\!0,\!1,\!0,\!1,\!0,\!1,\!0]^T$, ${{\bf{s}}^0_{{\rm{X2A}}}}\!\! =\!\! [1,\!0,\!1,\!0,\!0,\!1,\!0,\!0,\!1,\!0,\!0,\!1]^T$, ${{\bf{s}}^0_{{\rm{X2B}}}}\!\! =\!\! [0,\!0,\!1,\!0,\!1,\!0,\!1,\!0,\!1,\!0,\!1,\!0]^T$.
	The shortened cycles of $\mathrm{X1A}$, $\mathrm{X1B}$, $\mathrm{X2A}$ and $\mathrm{X2B}$ sequences are given as $T_{\rm X1A} = 4092$, $T_{\rm X1B} = 4093$, $T_{\rm X2A} = 4092$, and  $T_{\rm X2B} = 4093$. The count threshold of the $\mathrm{X1A}$ cycle, $\mathrm{X2A}$ cycle and $\mathrm{X1}$ cycle are set as $\theta_{\rm X1A}=3750$, $\theta_{\rm X2A}=3750$ and $\theta_{\rm X1}=15345037$. The cycle of $\mathrm{X2}$ sequence exceeds that of the $\mathrm{X1}$ sequence by $d=37$. Moreover, the spreading factor is $F=15$, and the spreading sequence is ${\bf{m}}_F=[1,0,0,0,1,1,1,1,0,1,0,1,1,0,0]^T$.

	\subsection{Simulation Results}   
	
    In the simulation,  quadrature phase shift keying (QPSK) symbols are transmitted. The maximum integer part of the normalized Doppler shift is $\alpha _{\rm max} = 2$, corresponding to a maximum speed of 1350 km/h \cite{bemani2023affine}. For $P=3$ paths, each path has a different Doppler shift generated by the Jakes model, i.e., $\nu _i = \alpha _{\rm max} \cos(\theta _i)$, where $\theta _i$ is uniformly distributed over $\left[-\pi,\pi\right]$ \cite{bemani2023affine}. The complex gain of the $i$-th path ${{h}_{i}}$ is set to be independent complex Gaussian random variables with zero mean and $1/P$ variance. Additionally, ${ \bf c}^{\rm B}_{2,\mu} = {\bf c}^{\rm A}_{2,\mu}$ for Bob, and ${ \bf c}^{\rm E}_{2,\mu} = {\bf 0}_{N \times 1}$ for Eve.
    Unless otherwise specified,  the simulation parameters are listed in Table \ref{tab:table2}.



    \renewcommand{\arraystretch}{1}  
    \begin{table}[h]
    	\caption{Simulation parameters\label{tab:table2} }  
    	\centering
    	\begin{tabular}{|c|m{5cm}|c|}
    		\hline 
    		\textbf{Symbol} &\textbf{Parameter}  &\textbf{Value}\\ 
    		\hline 
    		$f_c$ &Carrier frequency &24 GHz\\
    		\hline
    		$B$ &Bandwidth &15.36 MHz\\   
    		\hline
    		$\Delta f$ &Subcarrier spacing &15 kHz\\
    		\hline
    		$N$ &Number of subcarriers &1024\\
    		\hline 
    		$N_{\rm cp}$ &Number of CPP &17\\
    		\hline 
    		$M$ &Size of the codebook &1024\\
    		\hline
    		$N_{\rm sym}$ &Number of AFDM symbols per frame &1\\
    		\hline 	
    		$P$ &Number of channel paths &3\\
    		\hline
    		$l$ &Delay taps &[0,1,2]\\
    		\hline 
    		$\alpha _{\rm max}$ & Maximum integer part of the normalized Doppler shift &2\\
    		\hline
    	\end{tabular}
    \end{table}

	The BER performances versus SNR with different $c_{2,\rm max}$ are shown in Fig. \ref{fg:BER_vs_SNR_diff_c2}. In our proposed SE-AFDM system, the BER performances at Bob are almost the same as those of the existing AFDM system for any $c_{2,\rm max}$. However, as $c_{2,\rm max}$ increases, the BER performance at Eve deteriorates significantly, approaching 0.5. These BER results show that the security performance of the SE-AFDM system improves as $c_{2,\rm max}$ increases, which is consistent with the observation in Fig.~\ref{fg:cs_vs_c2max}.
	
	
	


			
			\begin{figure}[htbp]
				\centering
				\vspace{-5pt}
				\includegraphics[width=2.5in]{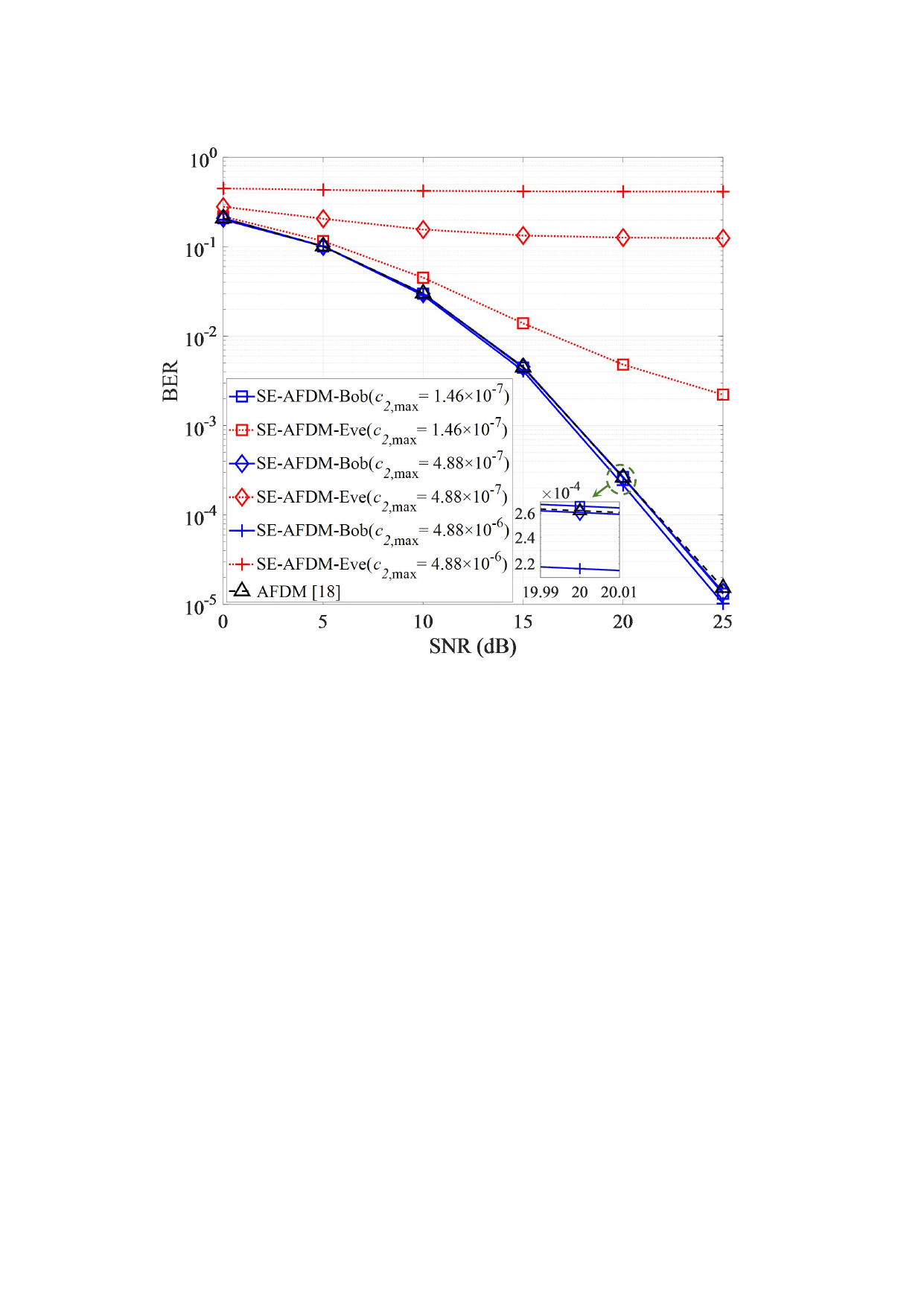}
				\caption{The BER performances versus SNR with different $c_{2,\rm max}$ of our SE-AFDM and the existing AFDM \cite{bemani2023affine}.
					\label{fg:BER_vs_SNR_diff_c2}}
				\vspace{-5pt}
			\end{figure}
			\vspace{-2pt}
			
			Next, the impact of channel estimation errors on the BER performance of the SE-AFDM system is evaluated using the channel estimation method in \cite{bemani2023affine}. The results are shown in Fig. \ref{fg:var_doppler}.	
			We set $c_{2,\rm max} = 4.88\times 10^{-6}$ and the pilot-symbol SNR to $30$ dB, i.e., SNR$\rm_p = 30$ dB.
			With estimated CSI, the BER of the proposed SE-AFDM system at Bob coincides with that of the AFDM system in \cite{bemani2023affine}, both of which are slightly worse than the BER of the perfect CSI case. Meanwhile, the BER of the SE-AFDM system at Eve is about 0.5 with estimated CSI, indicating that the SE-AFDM system remains effective under practical channel estimation.
			

		Then, we investigate the impact of the search interval $\Delta_{\rm E}$ on BER performance at Eve with $c_{2,\rm max}$ = $4.88\times 10^{-5}$, $M = 10^{6}$ and codebook interval of $9.76\times10^{-11}$. Each ${ c}^{\rm E}_{2,\mu}[q]$ is chosen as the element closest to ${c}^{\rm A}_{2,\mu}[q]$  among the search values, as shown in \eqref{choose_c2E}. As illustrated in Fig. \ref{fg:bias}, the BER at Eve degrades as $\Delta_{\rm E}$ increases. The BER at Eve is larger than 0.1 when $\Delta_{\rm E} > 7.8\times10^{-7}$. When $\Delta_{\rm E} < 9.77\times10^{-8}$, the BER drops below $1.77\times10^{-5}$. In this case, Eve needs to search about 1000 times for each ${c}^{\rm E}_{2,\mu}[q]$ to ensure the accuracy of the received data. These findings provide valuable insight into the design of the codebook $\mathcal{C}_2$: specifically, system security can be enhanced by maximizing the codebook range.	
		

			\begin{figure}[t]
			\vspace{-5pt}
			\centering
			\includegraphics[width=2.5in]{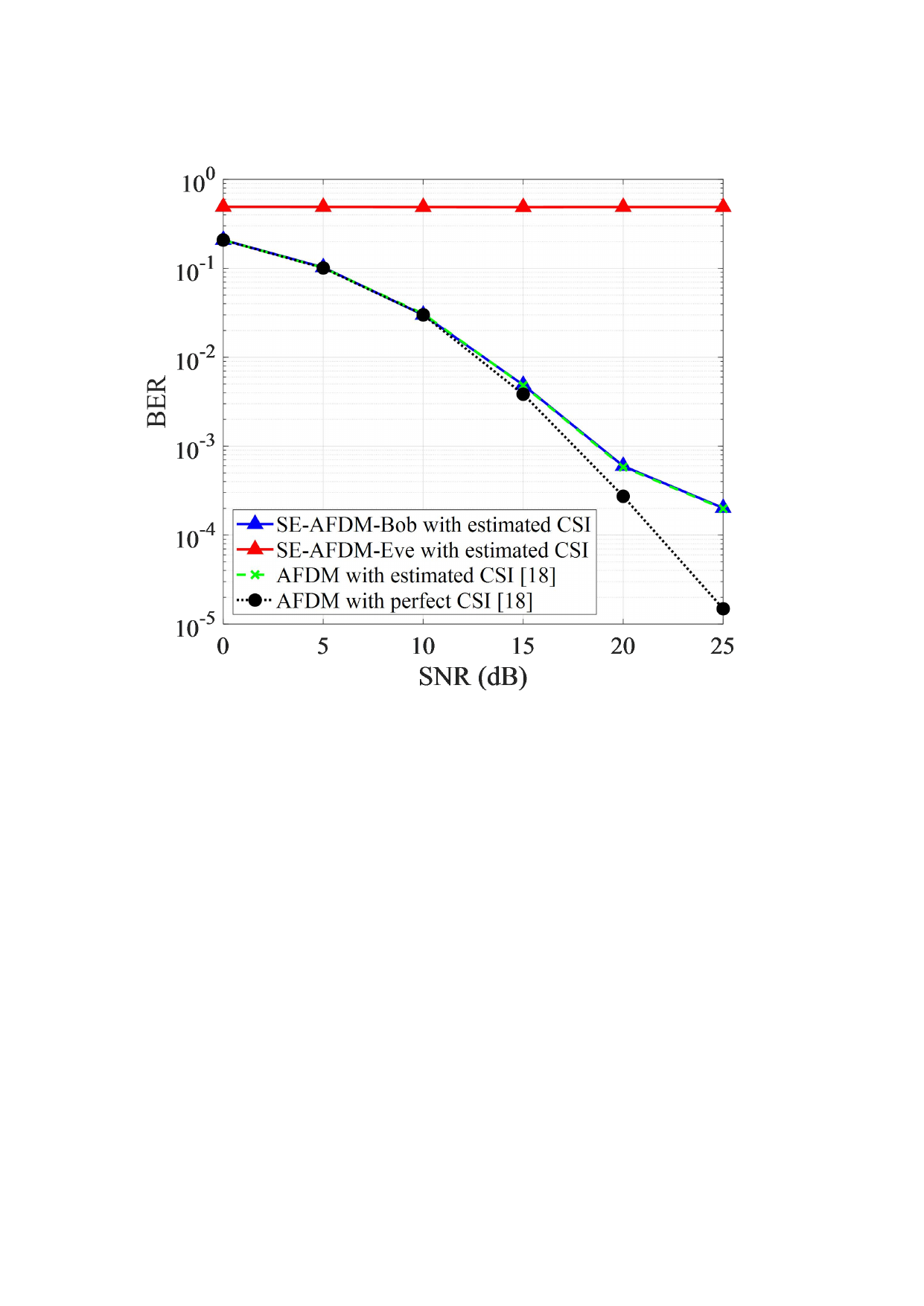}
			\caption{The BER performances of SE-AFDM and AFDM with estimated CSI at SNR$\rm_p$ = 30dB.
					\label{fg:var_doppler}}
				\vspace{-5pt}
			\end{figure}

				\begin{figure}[t]
					\centering
					\vspace{-5pt}
					\includegraphics[width=2.5in]{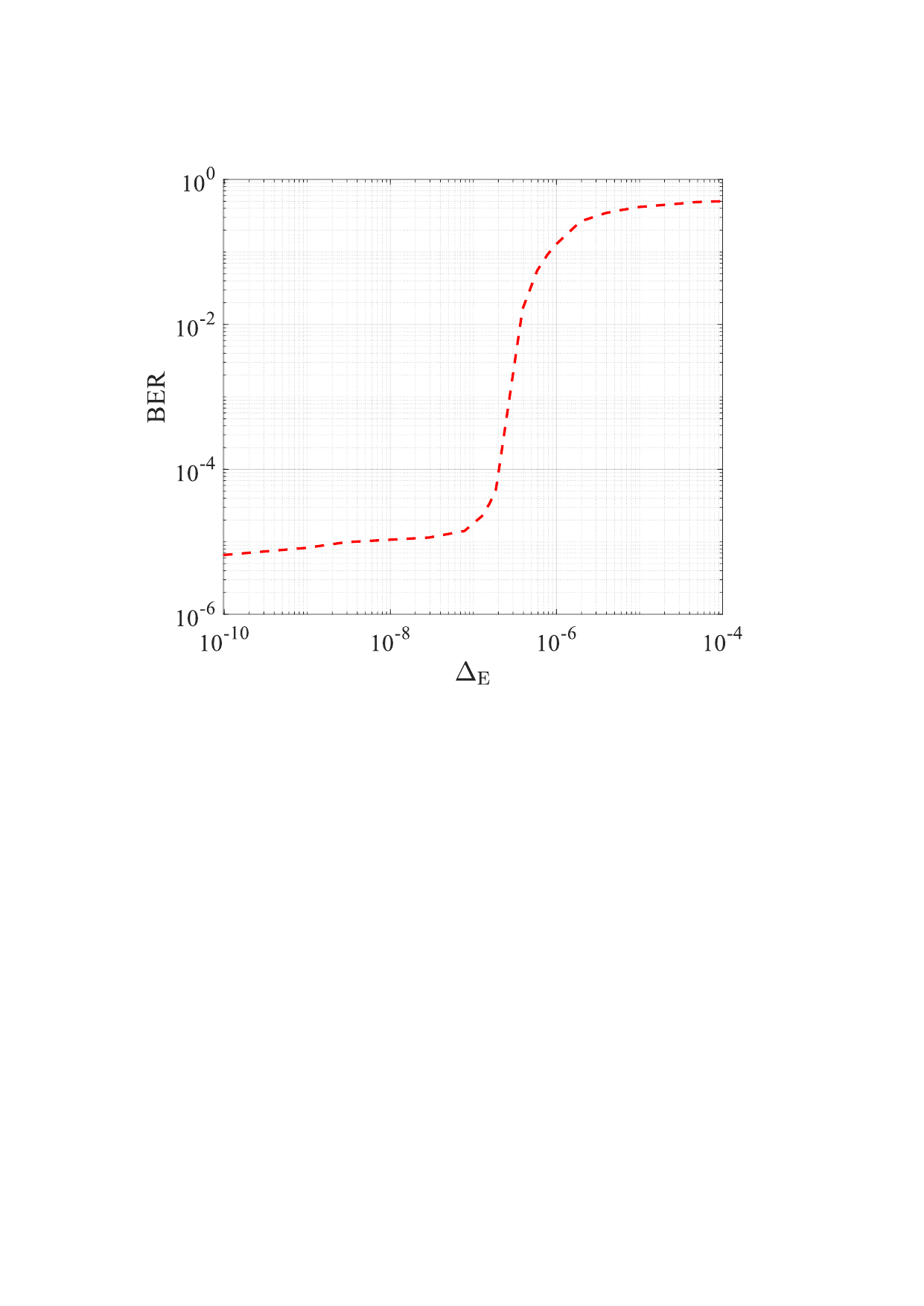}
					\caption{The BER performance of Eve versus the search interval $\Delta_{\rm E}$ with SNR = 25 dB.
							\label{fg:bias}}
						\vspace{-5pt}
				\end{figure}

				\begin{figure}[H]
					\centering
					\vspace{-5pt}
					\includegraphics[width=2.5in]{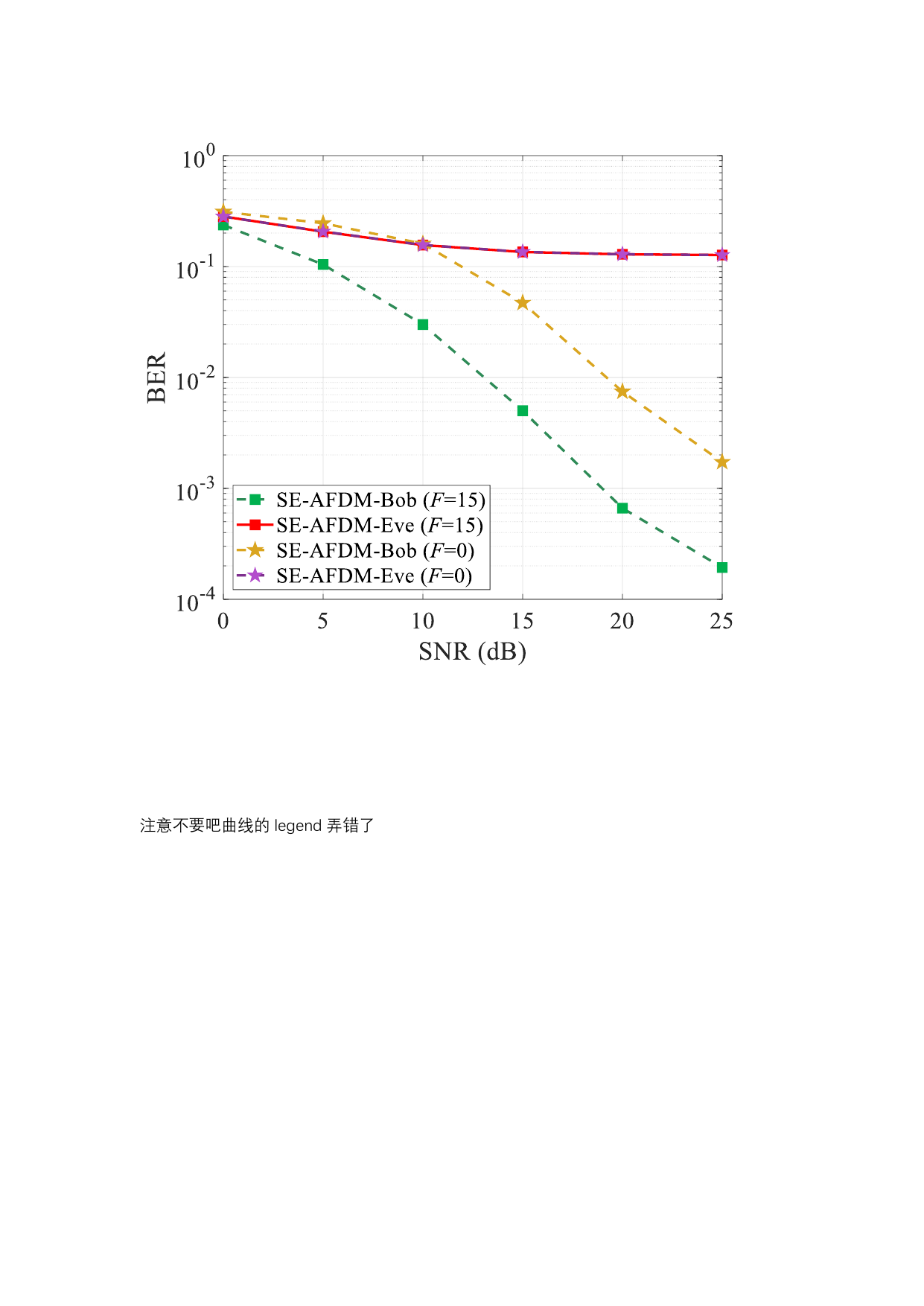}
					\caption{The BER performances of SE-AFDM with spreading factor $F=15$ and $F = 0$ under estimated CSI.
							\label{fg:DSSS_simu}}
						\vspace{-5pt}
					\end{figure}

					
					 
					 Finally, the impact of applying DSSS to the state parameters of the LPPN sequence generator at Alice is shown in Fig. \ref{fg:DSSS_simu}. Here, $c_{2,\rm max}$ is set to $4.88\times 10^{-7}$, and SNR$\rm_p = 30$ dB.
					 With estimated CSI, the BER performance of SE-AFDM with spreading factor $F = 0$ deteriorates at Bob due to incorrect received state parameters, which indicates the LPPN sequence synchronization failure.
					 In contrast, with spreading factor $F = 15$, the BER performance  at Bob remains consistent with that in Fig. \ref{fg:var_doppler}. The BER performance at Eve remains unchanged due to the inability to synchronize with Alice.

					\vspace{-2ex}
					
					\subsection{Experimental Results}
					
					 

				    To validate the feasibility of the proposed SE-AFDM communication system and synchronization framework, we conducted an over-the-air experiment under a two-path propagation scenario, as illustrated in Fig. \ref{fg:Simplified platform}. As shown in Fig. \ref{fg:Scenario}, the experimental setup consists of a transmitter and a receiver.			    
				    The transmitter employs a $21$ dBi omnidirectional antenna (Tx) connected to an SDR platform for signal generation and transmission. The receiver is configured with two omnidirectional antennas, Rx$1$ with a gain of $12$ dBi and Rx$2$ with a gain of $21$ dBi, both linked to the SDR platform. 				    
					The signals received by the two receiving antennas are fed into the SDR platform via a combiner.
					Tx is $2.473$ meters from Rx$1$ and $20.268$ meters from Rx$2$. With both receiving antennas connected to the SDR platform via $10$-meter SMA cables, the path delay arises solely from the disparity in free-space propagation distances. 
					Moreover, we simulate a frequency offset of $22.222$ kHz, corresponding to a velocity of $4800$ km/h.
					The detailed experimental parameters are provided in Table \ref{tab:table3}, and the results are presented in Fig. \ref{fg:kongkui}.
					
				
					 

					
					\vspace{-10pt}
				    \renewcommand{\arraystretch}{1.1}  
				    \begin{table}[h]
				    	\caption{Experimental parameters\label{tab:table3}}  
				    	\centering
				    	\begin{tabular}{|c|m{5cm}|c|}
				    		\hline 
				    		\textbf{Symbol} &\textbf{Parameter}  &\textbf{Value}\\ 
				    		\hline 
				    		$f_c$ &Carrier frequency &5 GHz\\
				    		\hline
				    		$B$ &Bandwidth &49.152 MHz\\
				    		\hline
				    		$\Delta f$ &Subcarrier spacing &48 kHz\\
				    		\hline
				    		$N$ &Number of subcarriers &1024\\
				    		\hline 
				    		$N_{\rm cp}$ &Number of CPP &13\\
				    		\hline 
				    		$M$ &Size of the codebook &1024\\
				    		\hline
				    		$N_{\rm sym}$ &Number of AFDM symbols per frame &256\\
				    		\hline 	
				    	\end{tabular}
				    \end{table}
					
					\begin{figure}[htbp] 
						\vspace{-5pt}
						\centering
						\includegraphics[width=3in]{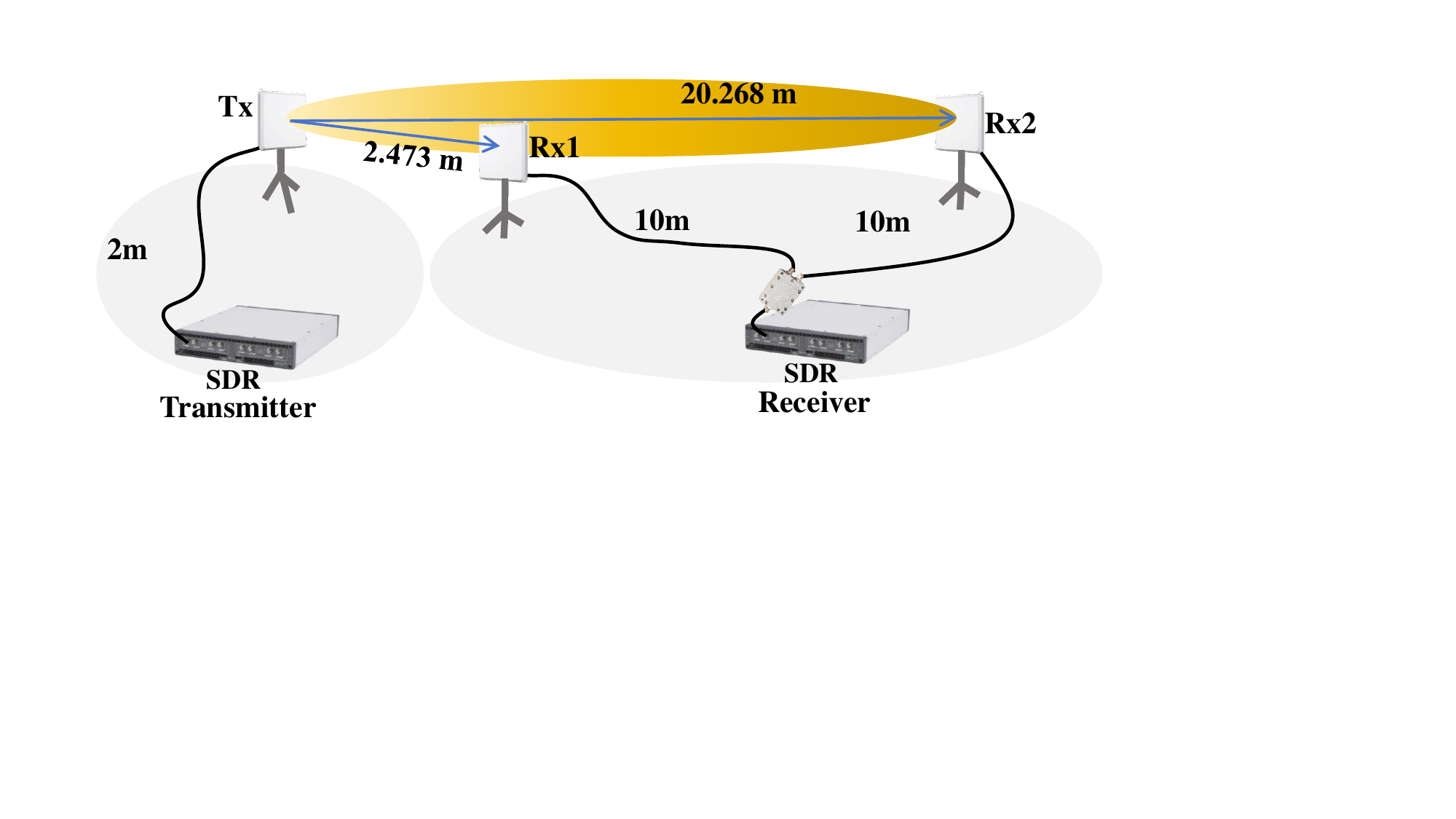}
						\caption{Schematic diagram of the experimental platform.
								\label{fg:Simplified platform}}
							\vspace{-5pt}
						\end{figure}

					
					Fig. \ref{fg:kongkui} shows the measured BER performances with $c_{2,\rm max} = 4.88\times 10^{-7}$ for Bob and Eve.
					Bob utilizes the state parameters with spreading factor $F=15$ of the LPPN sequence generator at Alice to synchronize the LPPN sequences between Alice and Bob. After LPPN sequence synchronization, the synchronized $\mathbf{c}^{\rm B}_{2,\mu}$ is generated for demodulation. 
					The BER at Bob decreases as the SNR increases, 
					exhibiting a trend consistent with the simulated results, i.e., the blue curves in Fig. \ref{fg:var_doppler} and the green curve in Fig. \ref{fg:DSSS_simu}.
					In contrast, the BER of Eve exhibits an error floor around $0.1$ as SNR increases, since Eve cannot synchronize with the varying $\mathbf{c}^{\rm A}_{2,\mu}$ of Alice. In summary, Fig. \ref{fg:kongkui} validates the effectiveness of the proposed synchronization framework in a real propagation environment.
					
					
						\begin{figure}[htbp]
						\vspace{-5pt}
						\centering
						\includegraphics[width=3.5in]{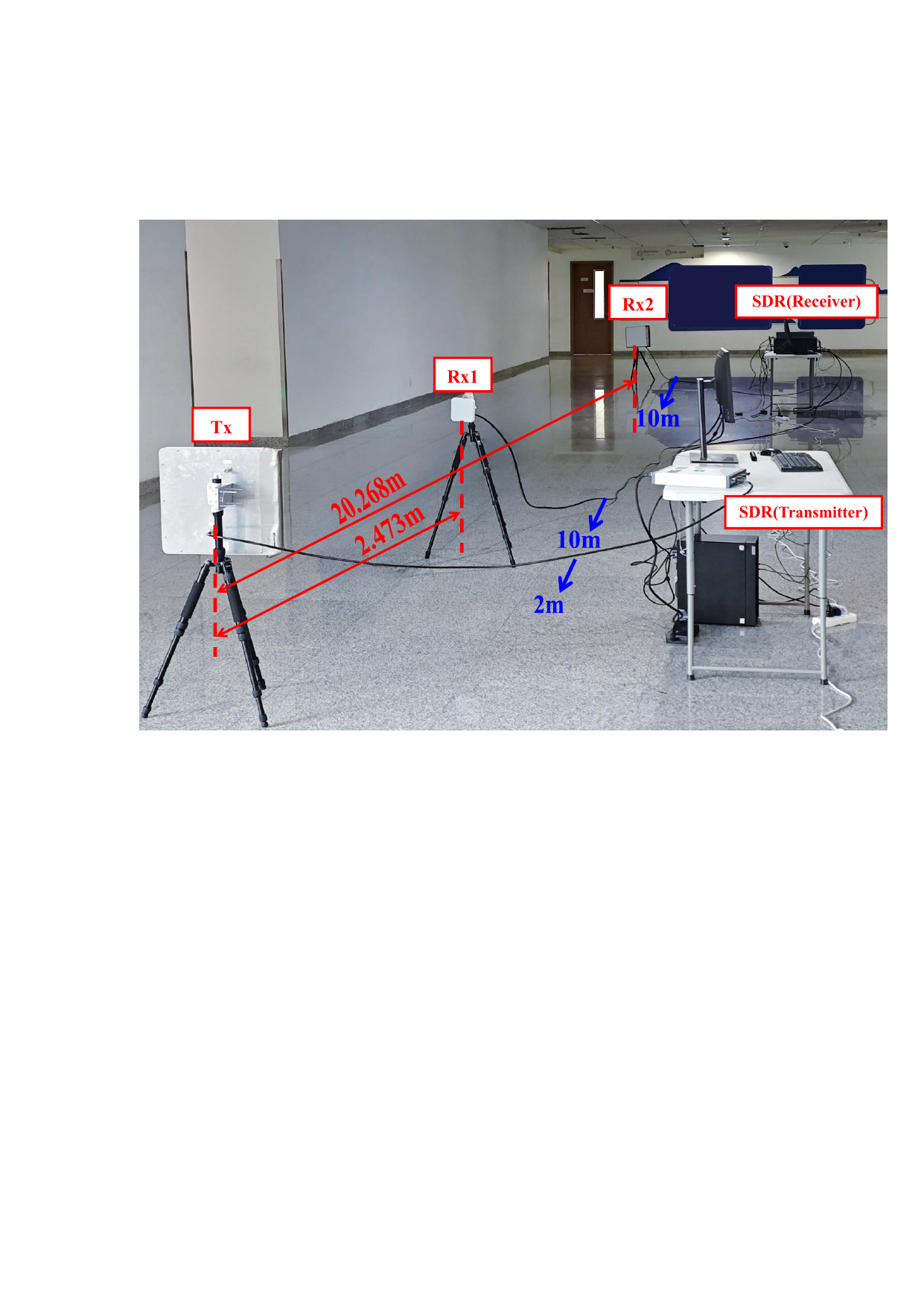}
						\caption{Over-the-air test scenario of the experimental platform.
								\label{fg:Scenario}}
							\vspace{-5pt}
						\end{figure}

					\begin{figure}[H]
						\vspace{-5pt}
						\centering
						\includegraphics[width=2.5in]{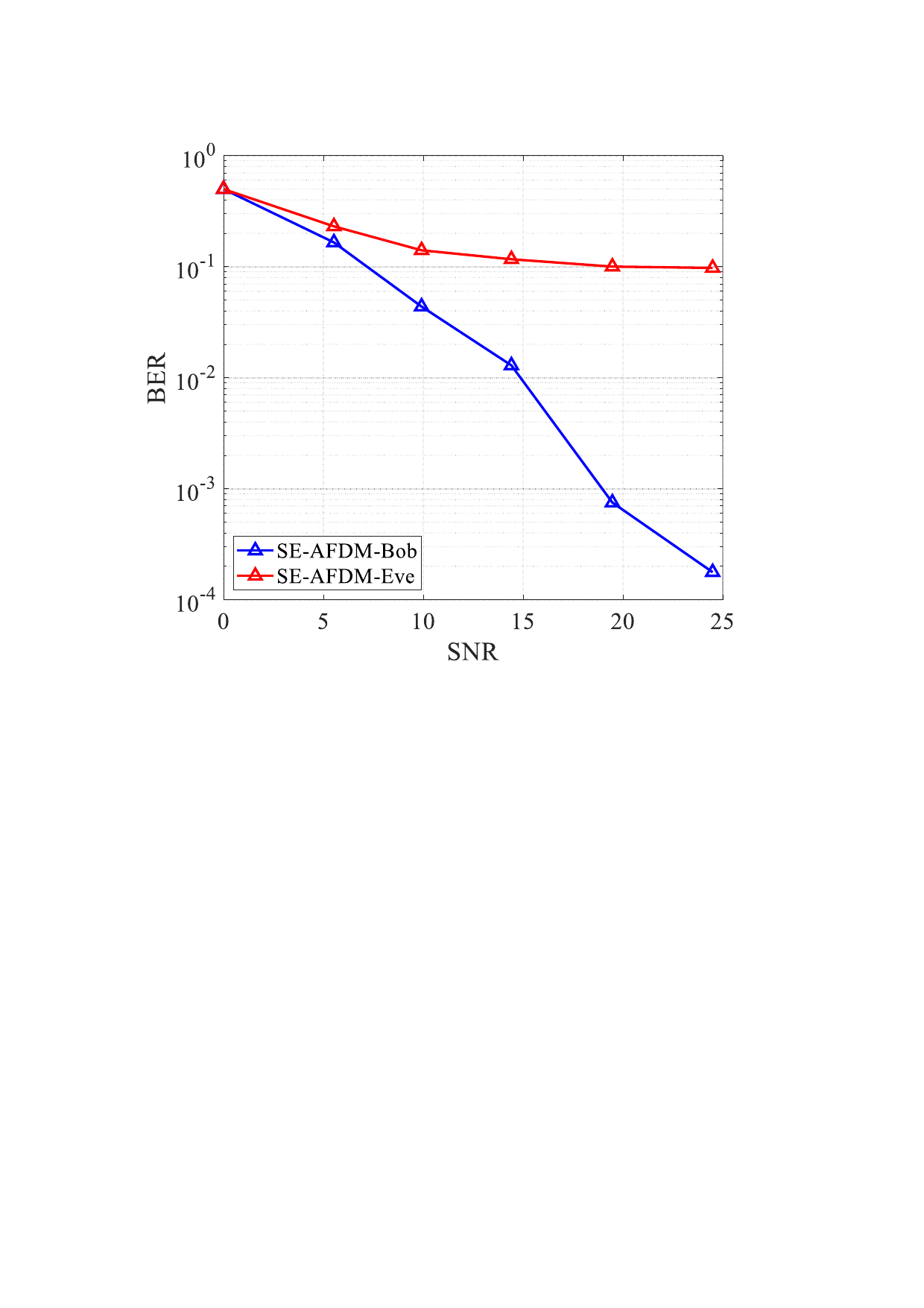}
						\caption{The experimental BER performances versus SNR of the SE-AFDM system with spreading factor $F=15$.
								\label{fg:kongkui}}
							\vspace{-5pt}
						\end{figure}
					

						\section{Conclusion} \label{sec:conclusion}
						
						This paper introduced an SE-AFDM communication system to enhance communication security by using the time-varying parameter $c_2$. 
						The varying $c_2$ was dynamically adjusted using an LPPN sequence and synchronized according to the proposed synchronization strategy between the legitimate receiver and the transmitter.
						Theoretical analysis showed that our SE-AFDM system can significantly improve communication security by configuring an appropriate parameter $c_2$.
						Numerical results revealed that Bob experiences no BER performance degradation compared with the conventional AFDM system, while Eve is unable to eavesdrop on information from Alice. 
						Experimental results verified the effectiveness of our proposed synchronization strategy.

	\vspace{-2ex}
	{
	\appendix  
		\vspace{-1ex}
	\section*{Proof of proposition \ref{proposition1}}   \label{proofcorollary1}
	\vspace*{-1pt}
     The effective output SINR of the $q$-th symbol at Eve, originally given in (\ref{eq:Eq25_1}), can be rewritten as:
	 \begin{align}
	 	\label{eq:SINR1}
	 	&{{\mathop{\rm SINR}\nolimits} _{E,q}} \!= \!\frac{{ \mathbb{E}\left\{ {|x[q]{|^2}} \right\}}}{\mathbb{E}{\left\{ {|x[q]{|^2}} \right\}\mathbb{E}\left\{ {{{\left| {{e^{j2\pi c_{2,\mu}^{\rm A}[q]{q^2}}} \!-\! 1} \right|}^2}} \right\}\! +\! \sigma _{n,E}^2}}.
	 \end{align} \normalsize
	 
	The transmit power of Alice is ${p_s}$ and the large-scale fading from Alice to Eve is $\alpha_E$. Then, (\ref{eq:SINR1}) can be reformulated as	
	 \begin{align} 	\label{eq:SINR1_1}
	 &{{\mathop{\rm SINR}\nolimits} _{E,q}} = \frac{{{p_s}\alpha _E^2}}{{{p_s}\alpha _E^2{\rm{ }} \mathbb{E}\left\{ {{{\left| {{e^{j2\pi c_{2,\mu}^{\rm A}[q]{q^2}}} - 1} \right|}^2}} \right\}+ \!\!\sigma _{n,E}^2}}\nonumber\\ 
	 &=\!\frac{\gamma_E}{\gamma_E \left\{2-\!\!\left(\mathbb{E}\left(e^{j 2 \pi c_{{2,\mu}}^{{\rm A}}[q] q^{2}}\right)\!\!+\!\!\mathbb{E}\left(e^{-j 2 \pi c_{{2,\mu}}^{{\rm A}}[q] q^{2}}\right)\right)\right\}\!\!+\!\!1},
	\end{align} \normalsize
	where $\gamma_E = {p_s} \alpha ^2_E/{{\sigma _{n,E}^2}}$ denotes the output SNR of the received signal at Eve. 
	The elements of the codebook $\mathcal{C}_2$ are specified in \eqref{eq:Az}, where the codebook interval ${\Delta_{\rm A}}=\frac{{2{c_{2,\max }}}}{M-1}$.
	The value ${c}^{\rm A}_{2,\mu}[q]$ is randomly drawn from the codebook $\mathcal{C}_2$ via an index generated by a truncated LPPN. Hence, we have
	\begin{align}
		\label{eq:SINR2}
			{\mathbb{E}\left(e^{j 2 \pi c_{{2,\mu}}^{{\rm A}}[q] q^{2}}\right)} &= \frac{1}{M}\sum\limits_{k = 0}^{M - 1} {{e^{j2\pi {q^2}\left( { - {c_{2,\max }} + k{\Delta_{\rm A}}  } \right)}}}.
	\end{align} \normalsize	
	If ${\Delta_{\rm A}}{q^2}   \in \mathbb{Z}$,	${\mathbb{E}\left(e^{j 2 \pi c_{{2,\mu}}^{{\rm A}}[q] q^{2}}\right)}$ = ${\mathbb{E}\left(e^{-j 2 \pi c_{{2,\mu}}^{{\rm A}}[q] q^{2}}\right)}$ = $1$. In other cases, \eqref{eq:SINR2} can be written as
\begin{eqnarray} \label{eq:SINR3}
	\mathbb{E}\left( {{e^{j2\pi c_{2,\mu}^{\rm A}[q]{q^2}}}} \right) = \frac{1}{M}{e^{ - j2\pi {q^2}{c_{2,\max }}}}\frac{{ {e^{j2\pi {\Delta _{\rm{A}}}{q^2}M}} -1}}{{ {e^{j2\pi {\Delta _{\rm{A}}}{q^2}}} -1 }}
\end{eqnarray}	

Similarly, we obtain
\begin{eqnarray} 	\label{eq:SINR4}
	\mathbb{E}\left( {{e^{-j2\pi c_{2,\mu}^{\rm A}[q]{q^2}}}} \right) = \frac{1}{M}{e^{  j2\pi {q^2}{c_{2,\max }}}}\frac{{ {e^{-j2\pi {\Delta _{\rm{A}}}{q^2}M}}-1  }}{{{e^{-j2\pi {\Delta _{\rm{A}}}{q^2}}} -1}}
\end{eqnarray}	
Substituting \eqref{eq:SINR3} and \eqref{eq:SINR4} into \eqref{eq:SINR1_1}, the final expression is given by
\begin{align}\label{eq:SINR_last}
&{{\mathop{\rm SINR}\nolimits} _{{\rm{E}},q}} =\nonumber\\ 
& \left\{ {\begin{array}{*{20}{l}}
		\hspace{-2ex} {{\gamma _{\rm{E}}},}&\hspace{-1.8ex}{{\Delta _{\rm{A}}}{q^2} \in \mathbb{Z},}\\
     	\hspace{-2ex} {\frac{{{\gamma _{\rm{E}}}}}{{{\gamma _{\rm{E}}}\!\left\{ {\!2 -\! \frac{2}{M}\!\Re \left\{\!\! {{e\!^{ - j2\pi {c_{2,\max }}{q^2}}}\!\frac{{{e^{j2\pi {\Delta _{\rm{A}}}{q^2}M}}\! -\! 1}}{{{e^{j2\pi {\Delta _{\rm{A}}}{q^2}}} \!- \!1}}\! } \right\} } \!\!\right\}\! +\! 1}}}\!,&\hspace{-1.8ex}{{\text{otherwise}}{\rm{.}}}
\end{array}} \right.
\end{align}

}

	\bibliographystyle{IEEEtran}
	\bibliography{IEEEabrv,refer}

\begin{thebibliography}{10}
\providecommand{\url}[1]{#1}
\csname url@samestyle\endcsname
\providecommand{\newblock}{\relax}
\providecommand{\bibinfo}[2]{#2}
\providecommand{\BIBentrySTDinterwordspacing}{\spaceskip=0pt\relax}
\providecommand{\BIBentryALTinterwordstretchfactor}{4}
\providecommand{\BIBentryALTinterwordspacing}{\spaceskip=\fontdimen2\font plus
\BIBentryALTinterwordstretchfactor\fontdimen3\font minus
  \fontdimen4\font\relax}
\providecommand{\BIBforeignlanguage}[2]{{%
\expandafter\ifx\csname l@#1\endcsname\relax
\typeout{** WARNING: IEEEtran.bst: No hyphenation pattern has been}%
\typeout{** loaded for the language `#1'. Using the pattern for}%
\typeout{** the default language instead.}%
\else
\language=\csname l@#1\endcsname
\fi
#2}}
\providecommand{\BIBdecl}{\relax}
\BIBdecl

\bibitem{wang2025mine}
P.~Wang, Z.~Wang, Y.~Ma, X.~Tian, and Y.~Ni, ``A secure affine frequency
  division multiplexing for wireless communication systems,'' in \emph{Proc.
  IEEE Int. Conf. Commun. (ICC)}, Jun. 2025.

\bibitem{ITU2023}
I.~{Recommendation}, ``Framework and overall objectives of the future
  development of {IMT} for 2030 and beyond,'' \emph{Int. Telecommunication
  Union (ITU) Recommendation (ITU-R)}, 2023.

\bibitem{ara2024physical}
I.~Ara and B.~Kelley, ``Physical layer security for 6{G}: Toward achieving
  intelligent native security at layer-1,'' \emph{IEEE Access}, vol.~12, pp.
  82\,800--82\,824, Jun. 2024.

\bibitem{solaija2022towards}
M.~S.~J. Solaija, H.~Salman, and H.~Arslan, ``Towards a unified framework for
  physical layer security in {5G} and beyond networks,'' \emph{IEEE Open J.
  Veh. Technol.}, vol.~3, pp. 321--343, Jul. 2022.

\bibitem{chen2019physical}
R.~Chen, C.~Li, S.~Yan, R.~Malaney, and J.~Yuan, ``Physical layer security for
  ultra-reliable and low-latency communications,'' \emph{IEEE Wireless
  Commun.}, vol.~26, no.~5, pp. 6--11, Oct. 2019.

\bibitem{liu2022ensuring}
C.~Liu, Y.~Zhang, J.~Xu, J.~Zhao, and S.~Xiang, ``Ensuring the security and
  performance of {IoT} communication by improving encryption and decryption
  with the lightweight cipher ublock,'' \emph{IEEE Syst. J.}, vol.~16, no.~4,
  pp. 5489--5500, Dec. 2022.

\bibitem{wyner1975wire}
A.~D. Wyner, ``The wire-tap channel,'' \emph{Bell Syst. Tech. J.}, vol.~54,
  no.~8, pp. 1355--1387, Oct. 1975.

\bibitem{gao2020lightweight}
X.~Gao, W.~Du, W.~Liu, R.~Wu, and F.~Zhan, ``A lightweight and efficient
  physical layer key generation mechanism for manets,'' in \emph{Proc. IEEE 6
  th Int. Conf. Comp. Commun. (ICCC)}, Dec. 2020, pp. 1010--1015.

\bibitem{gu2019secrecy}
Y.~Gu, Z.~Wu, Z.~Yin, and X.~Zhang, ``The secrecy capacity optimization
  artificial noise: A new type of artificial noise for secure communication in
  {MIMO} system,'' \emph{IEEE Access}, vol.~7, pp. 58\,353--58\,360, May 2019.

\bibitem{xu2021waveform}
T.~Xu, ``Waveform-defined security: A low-cost framework for secure
  communications,'' \emph{IEEE Internet Things J.}, vol.~9, no.~13, pp.
  10\,652--10\,667, Jul. 2022.

\bibitem{edman2011passive}
M.~Edman, A.~Kiayias, and B.~Yener, ``On passive inference attacks against
  physical-layer key extraction?'' in \emph{Proc. 4th Eur. Workshop Syst.
  Secur. (EUROSEC)}, Apr. 2011, pp. 1--6.

\bibitem{zou2016survey}
Y.~Zou, J.~Zhu, X.~Wang, and L.~Hanzo, ``A survey on wireless security:
  Technical challenges, recent advances, and future trends,'' \emph{Proc.
  IEEE}, vol. 104, no.~9, pp. 1727--1765, Aug. 2016.

\bibitem{becker2010secure}
G.~T. Becker, S.~C. Lo, D.~S. De~Lorenzo, P.~K. Enge, and C.~Paar, ``Secure
  location verification: A security analysis of {GPS} signal authentication,''
  in \emph{Proc. 26th Annu. IFIP WG 11.3 Conf. Data Appl. Secur. Privacy}, Jun.
  2010, pp. 366--373.

\bibitem{yuan2024papr}
H.~Yuan, Y.~Xu, X.~Guo, Y.~Ge, T.~Ma, H.~Li, D.~He, and W.~Zhang, ``{PAPR}
  reduction with pre-chirp selection for affine frequency division multiple,''
  \emph{IEEE Wireless Commun. Lett.}, vol.~14, no.~3, pp. 736--740, Mar. 2025.

\bibitem{sun2021orthogonal}
J.~Sun, Z.~Wang, and Q.~Huang, ``An orthogonal time frequency space direct
  sequence modulation scheme,'' in \emph{Proc. IEEE Int. Conf. Commun.
  Workshops (ICC Workshops)}, Jun. 2021, pp. 1--6.

\bibitem{sun2021secure}
{J. Sun, Z. Wang, and Q. Huang}, ``Secure precoded orthogonal time frequency
  space modulation,'' in \emph{Proc. IEEE 13th Int. Conf. Wireless Commun.
  Signal Processing (WCSP)}, Oct. 2021, pp. 1--5.

\bibitem{OTFShadani2017orthogonal}
R.~Hadani, S.~Rakib, M.~Tsatsanis, A.~Monk, A.~J. Goldsmith, A.~F. Molisch, and
  R.~Calderbank, ``Orthogonal time frequency space modulation,'' in \emph{Proc.
  IEEE Wireless Commun. Netw. Conf. (WCNC)}, Mar. 2017, pp. 1--6.

\bibitem{bemani2023affine}
A.~Bemani, N.~Ksairi, and M.~Kountouris, ``Affine frequency division
  multiplexing for next generation wireless communications,'' \emph{IEEE Trans.
  Wireless Commun.}, vol.~22, no.~11, pp. 8214--8229, Nov. 2023.

\bibitem{zhou2024overview}
Y.~Zhou, H.~Yin, J.~Xiong, S.~Song, J.~Zhu, J.~Du, H.~Chen, and Y.~Tang,
  ``Overview and performance analysis of various waveforms in high mobility
  scenarios,'' in \emph{Proc. 7th Int. Conf. Commun. Eng. Technol. (ICCET)},
  Feb. 2024, pp. 35--40.

\bibitem{yin2022pilot}
H.~Yin and Y.~Tang, ``Pilot aided channel estimation for {AFDM} in doubly
  dispersive channels,'' in \emph{Proc. IEEE/CIC Int. Conf. Commun. China
  (ICCC)}, Aug. 2022, pp. 308--313.

\bibitem{bemani2022low}
A.~Bemani, N.~Ksairi, and M.~Kountouris, ``Low complexity equalization for
  {AFDM} in doubly dispersive channels,'' in \emph{Proc. IEEE Int. Conf.
  Acoust. Speech Signal Process. (ICASSP)}, May 2022, pp. 5273--5277.

\bibitem{luo2024afdm}
Q.~Luo, P.~Xiao, Z.~Liu, Z.~Wan, N.~Thomos, Z.~Gao, and Z.~He, ``{AFDM}-{SCMA}:
  A promising waveform for massive connectivity over high mobility channels,''
  \emph{IEEE Trans. Wireless Commun.}, vol.~23, no.~10, pp. 14\,421--14\,436.

\bibitem{yin2024diagonally}
H.~Yin, X.~Wei, Y.~Tang, and K.~Yang, ``Diagonally reconstructed channel
  estimation for {MIMO}-{AFDM} with inter-doppler interference in doubly
  selective channels,'' \emph{IEEE Trans. Wireless Commun.}, vol.~23, no.~10,
  pp. 14\,066--14\,079, 2024.

\bibitem{ni2025integrated}
Y.~Ni, P.~Yuan, Q.~Huang, F.~Liu, and Z.~Wang, ``An integrated sensing and
  communications system based on affine frequency division multiplexing,''
  \emph{IEEE Trans. Wireless Commun.}, vol.~24, no.~5, pp. 3763--3779, May
  2025.

\bibitem{tao2025affine}
Y.~Tao, M.~Wen, Y.~Ge, J.~Li, E.~Basar, and N.~Al-Dhahir, ``Affine frequency
  division multiplexing with index modulation: Full diversity condition,
  performance analysis, and low-complexity detection,'' \emph{IEEE J. Sel.
  Areas Commun.}, vol.~43, no.~4, pp. 1041--1055, Apr. 2025.

\bibitem{zhu2023design}
J.~Zhu, Q.~Luo, G.~Chen, P.~Xiao, and L.~Xiao, ``Design and performance
  analysis of index modulation empowered {AFDM} system,'' \emph{IEEE Wireless
  Commun. Lett.}, vol.~13, no.~3, pp. 686--690, Mar. 2024.

\bibitem{liu2025pre}
G.~Liu, T.~Mao, Z.~Xiao, M.~Wen, R.~Liu, J.~Zhao, E.~Basar, Z.~Wang, and
  S.~Chen, ``Pre-chirp-domain index modulation for full-diversity affine
  frequency division multiplexing towards 6{G},'' \emph{IEEE Trans. Wireless
  Commun.}, 2025.

\bibitem{rou2025chirp}
H.~S. Rou and G.~T.~F. de~Abreu, ``Chirp-permuted {AFDM} for quantum-resilient
  physical-layer secure communications,'' \emph{IEEE Wireless Commun. Lett.},
  vol.~14, no.~8, Aug. 2025.

\bibitem{rou2025chirp2}
H.~S. Rou and G.~T.~F. de~Abreu, ``Chirp-permuted {AFDM}: A new degree of
  freedom for next-generation versatile waveform design,'' jul, 2025.
  \emph{arXiv:2507.20825}.

\bibitem{tek2025novel}
Y.~I. Tek and E.~Basar, ``A novel and secure {AFDM} system for high mobility
  environments,'' \emph{IEEE Trans. Veh. Tech. {\emph{(early access)}}}, Jul.
  2025.

\bibitem{wu2022integrating}
K.~Wu, J.~A. Zhang, X.~Huang, and Y.~J. Guo, ``Integrating low-complexity and
  flexible sensing into communication systems,'' \emph{IEEE J. Sel. Areas
  Commun.}, vol.~40, no.~6, pp. 1873--1889, Jun. 2022.

\bibitem{IS-GPS}
{IS-GPS-200N}, ``Navstar {GPS} space segment/navigation user segment
  interfaces,'' Aug, 2022. [Online]. Available:
  https://www.gps.gov/technical/icwg/IS-GPS-200N.pdf.

\bibitem{jiang2011performance}
Y.~Jiang, M.~K. Varanasi, and J.~Li, ``Performance analysis of {ZF} and {MMSE}
  equalizers for {MIMO} systems: An in-depth study of the high {SNR} regime,''
  \emph{IEEE Trans. Inf. Theory}, vol.~57, no.~4, pp. 2008--2026, Apr. 2011.

\bibitem{zou2007compensation}
Q.~Zou, A.~Tarighat, and A.~H. Sayed, ``Compensation of phase noise in {OFDM}
  wireless systems,'' \emph{IEEE Trans. Signal Process}, vol.~55, no.~11, pp.
  5407--5424, Nov. 2007.

\bibitem{tian2009m}
H.~Tian, K.~Zhou, H.~Jiang, J.~Liu, Y.~Huang, and D.~Feng, ``An {M}-sequence
  based steganography model for voice over {IP},'' in \emph{Proc. IEEE Int.
  Conf. Commun.}, Jun. 2009, pp. 1--5.

\bibitem{alagil2019randomized}
A.~Alagil and Y.~Liu, ``Randomized positioning {DSSS} with message shuffling
  for anti-jamming wireless communications,'' in \emph{Proc. IEEE Conf. Depend.
  Secure Comput. (DSC)}, Nov. 2019, pp. 1--8.

\bibitem{hou2022multisignal}
C.~Hou, G.~Liu, Q.~Tian, Z.~Zhou, L.~Hua, and Y.~Lin, ``Multisignal modulation
  classification using sliding window detection and complex convolutional
  network in frequency domain,'' \emph{IEEE Internet Things J.}, vol.~9,
  no.~19, pp. 19\,438--19\,449, Oct. 2022.

\end{thebibliography}

%
%
%
%
%

	\vfill
	
	\end{document}